\def\cH{\mathcal{H}}
\def\cQ{\mathcal{Q}}
\newtheorem{definition}{Definition}
\newtheorem{lemma}{Lemma}
\newtheorem{corollary}{Corollary}
\def\one{{\mathchoice {\rm 1\mskip-4mu l} {\rm 1\mskip-4mu l} {\rm
1\mskip-4.5mu l} {\rm 1\mskip-5mu l}}}
\newcommand{\ket}[1]{\left| #1\right\rangle}        
\newcommand{\sket}[1]{| #1\rangle}  
\newcommand{\bra}[1]{\left\langle #1\right|}        
\newcommand{\braket}[2]{\langle #1 | #2 \rangle} 
\newcommand{\ketbra}[1]{| #1 \rangle \! \mspace{2mu}\langle #1| }
\begin{document}

\title{Improved implementation of reflection operators}

\author{
Anirban Narayan Chowdhury}
\affiliation{Center for Quantum Information and  Control, University of New Mexico, Albuquerque, NM 87131, US}
\affiliation{New Mexico Consortium,
Los Alamos,
NM 87545, US.}

\author{
Yi\u{g}it Suba\c{s}\i}
\affiliation{Theoretical Division, Los Alamos National Laboratory, Los Alamos, NM 87545, US.}

\author{
Rolando Diego Somma}
\affiliation{Theoretical Division, Los Alamos National Laboratory, Los Alamos, NM 87545, US.}

\date{\today}	

\begin{abstract}
Quantum algorithms for diverse problems, including search and optimization problems,
require the implementation of a reflection operator over a target state. Commonly, such reflections
are approximately implemented using phase estimation. 
Here we use 
a linear combination of unitaries and a version of amplitude
amplification to approximate reflection operators over eigenvectors of unitary operators using exponentially less ancillary qubits
in terms of a precision parameter.
The gate complexity of our method is also comparable to that of the phase estimation approach
in a certain limit of interest.
%
%
Like phase estimation, our method requires the implementation of controlled unitary operations.
We then extend our results to the Hamiltonian case where the target state is an eigenvector of a Hamiltonian
whose matrix elements can be queried. 
Our results are useful in that they reduce the resources required by
various quantum algorithms in the literature. Our improvements
also rely on an efficient quantum algorithm to prepare a quantum state with Gaussian-like 
amplitudes that may be of independent interest. We also provide a lower bound
on the query complexity of implementing approximate reflection operators on a quantum computer.
\end{abstract}

\maketitle

\section{Introduction}
\label{sec:intro}
Large quantum computers will be able to solve
problems that may never be solved
by classical computers. There are numerous
examples of problems for which a quantum speedup exists -see Ref.~\cite{jordanzoo} for a summary.
Rather than investigating new problems, this paper
is concerned with improving the resources required by known
quantum algorithms for, e.g., adiabatic state transformations, search, and related optimization 
problems, including those in Refs.~\cite{MNR07,BKS10}.

A key procedure that is used in Refs.~\cite{MNR07,BKS10}
is that of performing a reflection (a unitary transformation) over a target quantum state $\ket{\psi_0} \in \cH$,
where $\cH$ is a  Hilbert space of dimension $D<\infty$.
This reflection may be used within the context of amplitude amplification, a well-known
technique for quantum algorithms~\cite{Gro96,BHMT02,Amb12}.
The quantum state has the property that it is the unique eigenvector of a unitary $U$ of eigenvalue 1 (eigenphase $\lambda_0=0$), i.e.,
\begin{align}
U \ket{ \psi_0} =e^{i \lambda_0}\ket{ \psi_0} = \ket { \psi_0} \;,
\end{align}
although more complicated cases can be analyzed similarly (e.g., when the degeneracy of the eigenvalue 1 is greater than one
or when $\lambda_0 \ne 0$ is known).
 Here we assume that there is a procedure to implement a controlled operation $U$ and $U^\dagger$. We will
 measure the query complexity of a quantum algorithm, $C_U$, as the number of times
 that $U$ and $U^\dagger$, or their controlled versions, have to be invoked to solve the problem, with sufficiently high probability and precision.
 The gate complexity, $C_B$, will be the number of additional
 two-qubit gates that are independent of $U$. 
  If $R_{ \psi_0} $ is the desired reflection, it  has to satisfy
 \begin{align}
 \label{eq:ref1}
 R_{ \psi_0} \ket { \psi_0}  &= \ket { \psi_0} \;,  \\
  \label{eq:ref2}
 R_{ \psi_0}  \sket{\psi^\perp} &= -\sket{\psi^\perp} \;,
 \end{align}
 where $\sket{\psi^\perp}$ is any quantum state that is orthogonal to $\ket{ \psi_0} $, i.e.,
 $\bra{ \psi_0}  \psi^\perp \rangle=0$.
 We are interested in using the (controlled) operations $U$ and $U^\dagger$ to implement $\tilde R_{ \psi_0} $ such that it is an $\epsilon$-approximation
 of $R_{ \psi_0} $, for a given precision parameter $\epsilon <1 $.

 Constructing $\tilde R_{ \psi_0} $ from $U$ and $U^\dagger$ is not trivial in general
 and may require using additional information about $U$.
 We will assume that there exists a known $\Delta >0$
 so that any other nonzero eigenphase of $U$ satisfies
 $\Delta \le \lambda_j \le 2\pi -\Delta$, $j > 0$. This additional information
 is also used in Refs.~\cite{MNR07,BKS10}
 and may not be too strong.
  A standard procedure to build $\tilde R_{ \psi_0} $ is then
  via the well known phase estimation algorithm (PEA)~\cite{PEA95}. Roughly,
  the steps to implement the reflection
  using the PEA are as follows: i) encode the value of the eigenphase in
  an ancillary $n$-qubit register, ii) perform a reflection over the ancillary state $\ket0=\ket 0^{\otimes n}$ (representing $\lambda_0=0$),
  and iii) implement the inverse of the operation in step i). 

Our main goal is to significantly improve the number of ancillary qubits
required by the above PEA approach, without increasing the query and gate complexities. 
To this end, we will use
two techniques considered recently
 within the context of Hamiltonian simulation~\cite{BCC+15}.
 One technique is based on a decomposition of a unitary operation
 as a linear combination of unitaries (LCU), which
 was also considered in other works~\cite{SOGKL02,CW12,CKS15,BCC+15}.
 The other technique is based on a version of amplitude amplification
 and we refer to it as oblivious amplitude amplification (OAA), which was also considered in Ref.~\cite{BCC+15}.
 We present a quantum algorithm that implements
 $\tilde R_{ \psi_0}$ and requires
  $n=O (  \log    \log(1/\epsilon) + \log(1/ \Delta  )  )$ ancillary qubits.
  We demonstrate that this is an improvement with respect
 to the PEA approach where the number of ancillary qubits is
 $n=O(\log(1/\epsilon)\log(1/\Delta))$. The gate complexity
 of our quantum algorithm is comparable to that of the PEA 
 approach in a limit of interest where $\Delta = O(1/\log(1/\epsilon))$.
 This limit includes cases where $\Delta \ll 1$ and $ \epsilon \ll 1$.
 The gate complexity is $O(\log(1/\Delta) \log(\log(1/\Delta)/\epsilon))$.
 We emphasize that our analysis of the gate complexity of the PEA approach leads to an improvement 
 which is almost quadratic in $\log(1/\Delta)$ over that stated in \cite{MNR07}. 
 This is made possible through use of the approximate quantum Fourier transform.
 The query complexity of both approaches is $C_U=O(\log(1/\epsilon)/\Delta)$.
 Our results are 
 therefore useful
 to reduce the number of ancillary qubits 
 required by various quantum algorithms in the literature.
 
 A similar problem was considered in Ref.~\cite{Tul16},
 where the author showed that $\tilde R_{ \psi_0}$ can be implemented
 using $O(\log(1/\Delta))$ ancillary qubits and $O(\log^2(1/\epsilon)/\Delta)$ queries.
 We consider our results to be an improvement of Ref.~\cite{Tul16} since the $\epsilon$-dependence 
 of the number of ancillary qubits in our approach is still too small and the query complexity is 
 the same as the PEA approach, i.e., smaller than that of Ref.~\cite{Tul16}.


 Our paper is organized as follows. 
 In Sec.~\ref{sec:problem} we formalize the problem. 
 In Sec.~\ref{sec:PEA} we discuss the resources needed to perform reflections
  using the PEA approach and analyze the case where the approximate quantum Fourier transform is considered. In Sec.~\ref{sec:LCU}
 we describe our quantum algorithm (the LCU approach) based on the Poisson summation formula
 and the approximation as a LCU. The correctness of the LCU approach is discussed in Sec.~\ref{sec:LCUcorrect}.
 The operations involved in our quantum algorithm are
 discussed in detail in Secs.~\ref{sec:WB} and~\ref{sec:selectU}.
 The resources required by our approach are discussed in  Sec.~\ref{sec:LCUcomplex}.
 In Sec.~\ref{sec:Hamilt} we extend our results to the Hamiltonian case, which is basically given when $U=e^{i(H-\lambda_0)}$,
 for some Hamiltonian $H$. To obtain the complexities and number of ancillary qubits needed in this case we
 resort to a recent method for Hamiltonian simulation in Ref.~\cite{LC17}.
 Finally, we obtain a lower bound on the query complexity in Sec.~\ref{sec:lowerbound}.

\section{Problem statement}
\label{sec:problem}
Let $U$ and $R_{\psi_0}$ be the unitaries of Sec.~\ref{sec:intro}.
Our goal is to construct a quantum algorithm that implements a unitary operation $A$ that approximates a reflection over
a target state $\ket{\psi_0}$ as follows: 
\begin{align}
\label{eq:statement}
\| A \ket 0 \ket \xi - \ket 0 (R_{\psi_0} \ket \xi) \| \le \epsilon \;.
\end{align}
$\epsilon >0$ is a given error parameter, $\ket \xi \in \cH$ is any quantum state, and $\ket 0$ is an ancillary
state of $n$ qubits. $\|.\|$ is the Euclidean norm.
 We assume that there is a mechanism to implement controlled unitaries $U$ and $U^\dagger$ and
other $U$-independent quantum gates.
The query complexity of our quantum algorithm is measured by the number
of controlled-$U$ and their inverses needed to apply $A$. The gate complexity is the number of additional
two-qubit gates (independent of $U$) to implement $A$. These two-qubit gates are assumed to be
exactly implemented (i.e., we do not invoke any error correction or approximation method such as those discussed in Ref.~\cite{BGS13}).

The Hamiltonian version of this problem is defined such that
$\ket{\psi_0}$ is an eigenstate of a Hamiltonian $H$ where $\| H \| \le 1$. In this case
we need a mechanism $U'$ that implements an approximation
of $U:=e^{i(H-\lambda_0)}$ so
we can reduce this problem to the one described above. Here, $\lambda_0$ is the eigenvalue of $H$
corresponding to $\ket{\psi_0}$ and is assumed to be known.
We consider the scenario where $H$  is $d$-sparse and its matrix elements
can be queried. That is, any row (column) of the
$D \times D$ matrix $H$ has at most $d$ nonzero 
matrix elements. We assume that there is a procedure
$\cQ_H$ that computes such elements as follows.
For any ${\rm j} \in \{1,2,\ldots,D\}$ and ${\rm l} \in \{1,2,\ldots,d\}$,
\begin{align}
\cQ_H \ket{{\rm j,l}} \rightarrow \ket{{\rm j,v(j,l)}} \;,
\end{align}
where  ${\rm v(j,l)}$ is the row index of the l-th nonzero
matrix element in the j-th column of $H$.
The procedure $\cQ_H$ also allows us to perform the transformation
\begin{align}
\cQ_H \ket{{\rm j,k,z}} \rightarrow \ket{{\rm j,k,z\oplus h_{jk}}}
\end{align}
for any ${\rm j,k }\in \{1,2,\ldots,D\}$. ${\rm h_{jk}}$ is the corresponding
matrix element of $H$ (assumed to be described within 
a fixed number of bits $h$) and ${\rm z }\in \{0,1\}^h$.
This procedure is used in several recent methods for Hamiltonian
simulation (c.f., Ref.~\cite{LC17}). References~\cite{BCC+14,BCC+15,LC17}
describe a way to construct $U'$ using $\cQ_H$, satisfying 
\begin{align}
\| U'\ket 0 \ket \xi- \ket 0 U \ket \xi \| \le \varepsilon \;,
\end{align}
for any $0< \varepsilon < 1$ and any $\ket \xi$. $\ket 0$ is an ancillary state of $n'_H \ge 1$ qubits.

\section{The PEA approach}
\label{sec:PEA}
For completeness,
we provide a method to perform approximate reflections on the target state $\ket {\psi_0}$ using the PEA.
This method was used in Refs.~\cite{MNR07,PW09,BKS10}.
The PEA is depicted in Fig.~\ref{fig:pea} 
and is represented by a system-ancilla
unitary $W$.  $F^{\rm d}$ is the well-known quantum Fourier transform~\cite{PEA95}.
We consider the case where the input state $\sket {\psi_j}$ is an eigenstate of $U$ of eigenphase $\lambda_j$.
The quantum state prepared right before the action of $(F^{\rm d})^\dagger$ is ($M=2^n$)
\begin{align}
\frac 1 {\sqrt M }\sum_{m=0}^{M-1} \ket m e^{i m \lambda_j} \ket {\psi_j} \;.
\end{align}
If $\lambda_j= 0$ (i.e., $j=0$), the action of $(F^{\rm d})^\dagger$ transforms the state exactly 
back to $\ket 0 \sket {\psi_0}= \ket 0^{\otimes n} \sket{\psi_0}$.
We let $R:=2 P -\one$ be the simple reflection operator over the ancilla state $\ket 0$ and $P :=\ketbra 0 \otimes \one$ is the 
projector. We also define the unitary $A:= W^\dagger RW$.

\begin{figure}
\begin{center}
\includegraphics[width=.47\textwidth]{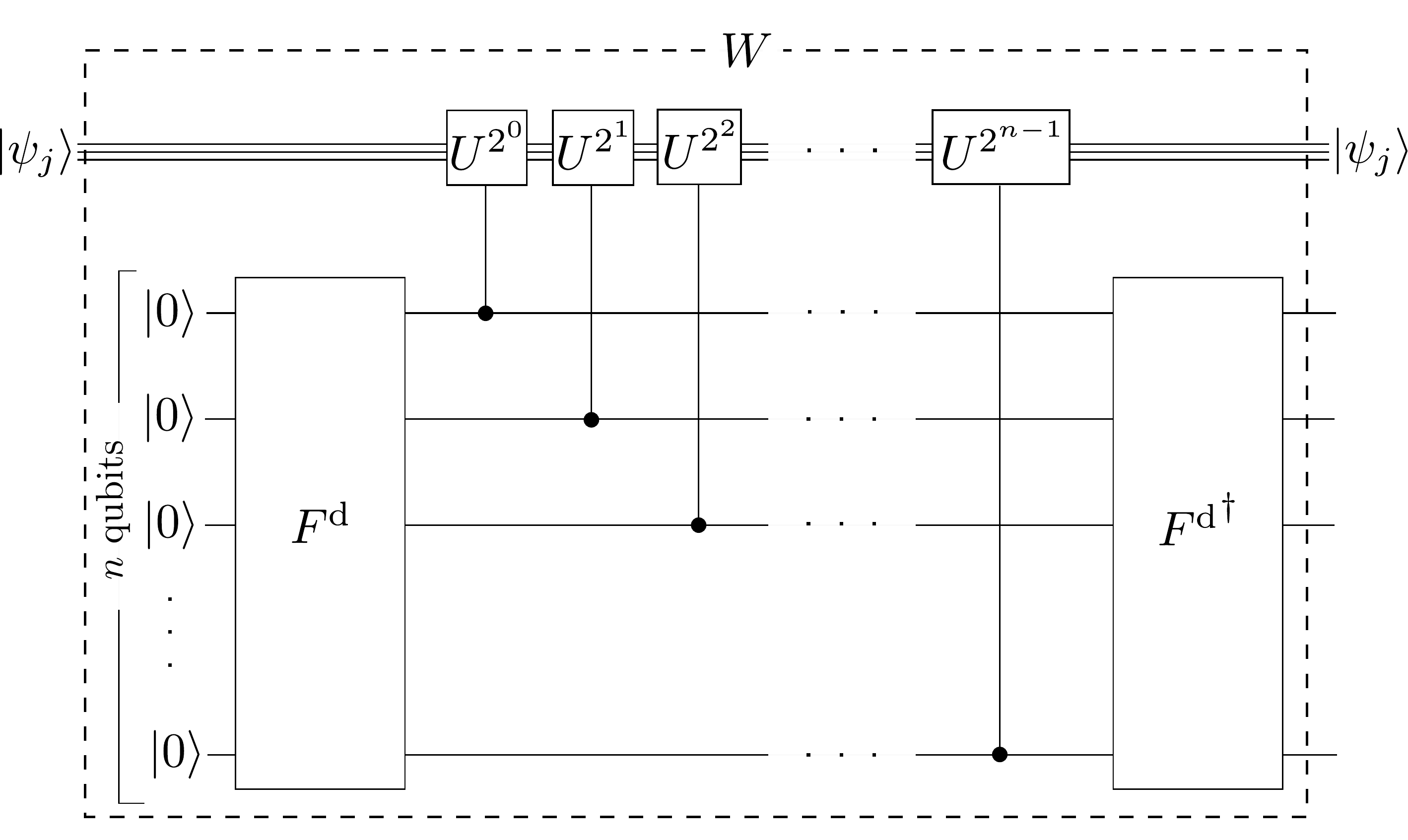}
\end{center}
\caption{Phase estimation algorithm. The unitary $W$ can be used to construct an approximate
reflection over the eigenstate $\ket{\psi_0}$ of the unitary $U$.}
\label{fig:pea}
\end{figure}

The state $\ket 0 \ket{\psi_0}$ is invariant if we act with $A$ because $R\ket 0 \ket {\psi_0} =\ket 0 \ket {\psi_0}$.
The situation is different if $j \ne 0$ and $\lambda_j >0$. It is well-known  (c.f.,~\cite{CEMM98,KOS07}) 
that if $\Delta \le \lambda_j \le 2\pi-\Delta$,
choosing $n=O(\log(1/\Delta))$ suffices to satisfy
\begin{align}
\label{eq:PEAaction}
W \ket 0 \sket{\psi_j} =  \sqrt p \ket 0 \sket{\psi_j} + \sqrt{1-|p|} \sket{0^\perp} \ket{\psi_j} \;.
\end{align}
Here, $ \sket{\psi_j}$ is the eigenvector with $j > 0$ and
  $\sket{0^\perp}$ is a quantum state that has support in the subspace of the ancilla 
orthogonal to $\ket 0$. $|p|$ is a constant that satisfies, e.g., $|p| < 1/16$.

Applying $A$ to  $\ket 0 \sket{\psi_j}$
approximates then the desired reflection with  constant  approximation error.
To see this, we note that
\begin{align}
\bra 0 \bra{\psi_j} W^\dagger \ket 0 \ket{\psi_j} &= \sqrt{p}^* \;,\\
\bra 0 \bra{\psi_j} W^\dagger \sket {0^\perp} \ket{\psi_j} &= \sqrt{1-|p|} \;,
\end{align}
and then
\begin{align}
\bra 0 \bra{\psi_j} A \ket 0 \ket{\psi_j} = -1+2|p|\;.
\end{align}
Since $A$ is unitary, Eq.~\eqref{eq:statement} follows with $\epsilon=2\sqrt{|p|} <1/2$.

To improve the approximation error to $O(\epsilon)$, it suffices to run
$q=O(\log(1/\epsilon))$ PEAs as in Fig.~\ref{fig:pea2}.
The unitary operation $B$ is composed
of $q$ quantum Fourier transforms in parallel and the operation $W$ is
depicted in the figure.
The total number of ancillary qubits
is 
\begin{align}
\label{eq:nPEA}
n=O\left (\log \left(\frac 1 {\epsilon}\right)\log \left(\frac 1{\Delta} \right) \right)
\end{align}
 and $p$ in Eq.~\eqref{eq:PEAaction} is now bounded as, e.g.,  $|p|=O(\epsilon^2)$.
It follows that, for $A=W^\dagger R W$,
\begin{align}
\nonumber
\left \| ( A - \one) \ket 0 \sket{{\psi_0}} \right\| &=0 \;, \\
\left \| ( A + \one) \ket 0 \sket{\psi_j} \right\| &\le \epsilon \;, j > 0 \; ,
\end{align} 
so that Eq.~\eqref{eq:statement} follows and the approximate reflection is implemented.

\begin{figure}
\begin{minipage}[c][]{.47\textwidth}
  \vspace*{\fill}
  \centering
  \subfloat[]{\includegraphics[width=\textwidth]{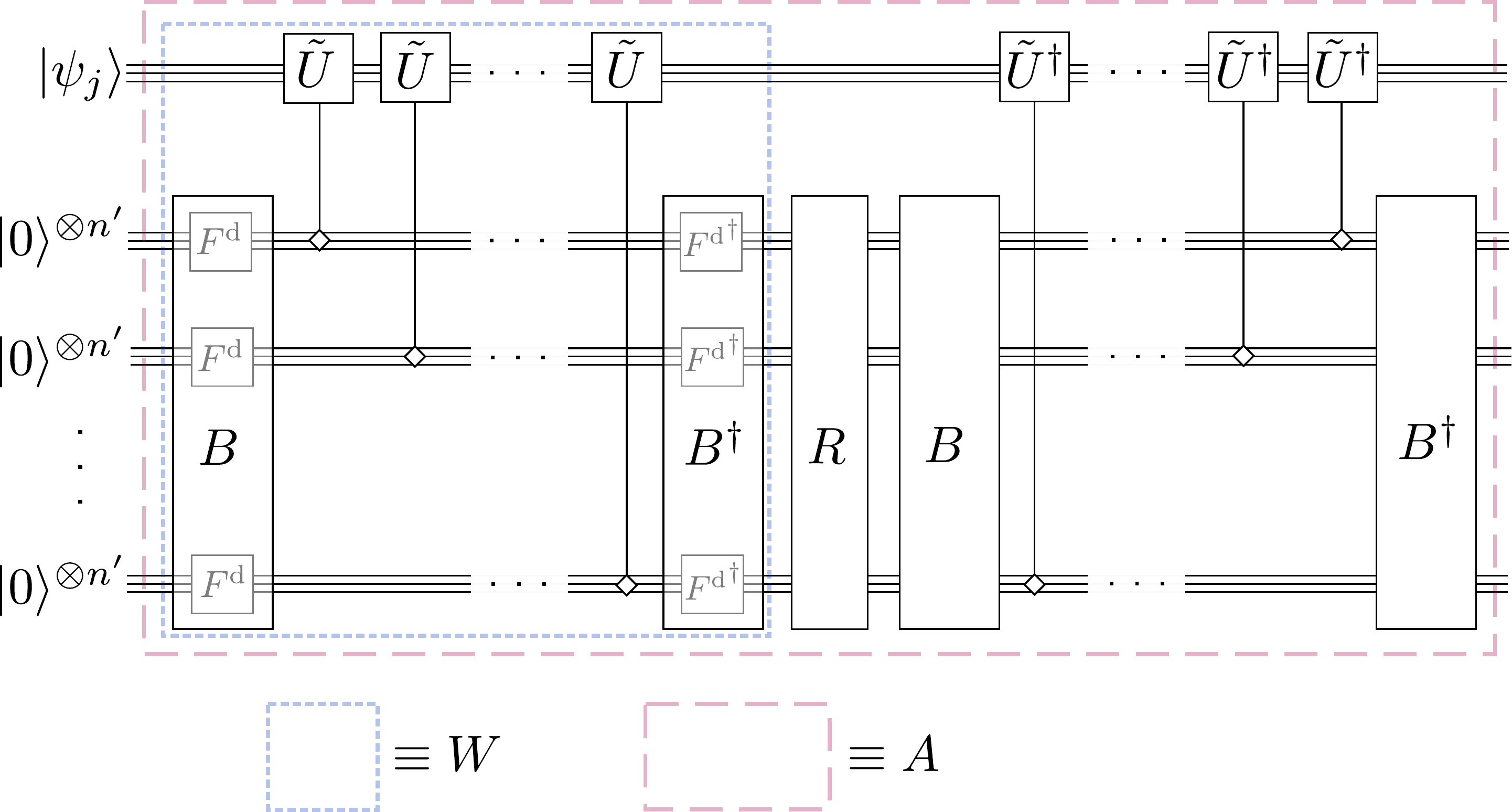}}
  \par
  \subfloat[]{\includegraphics[width=.7\textwidth]{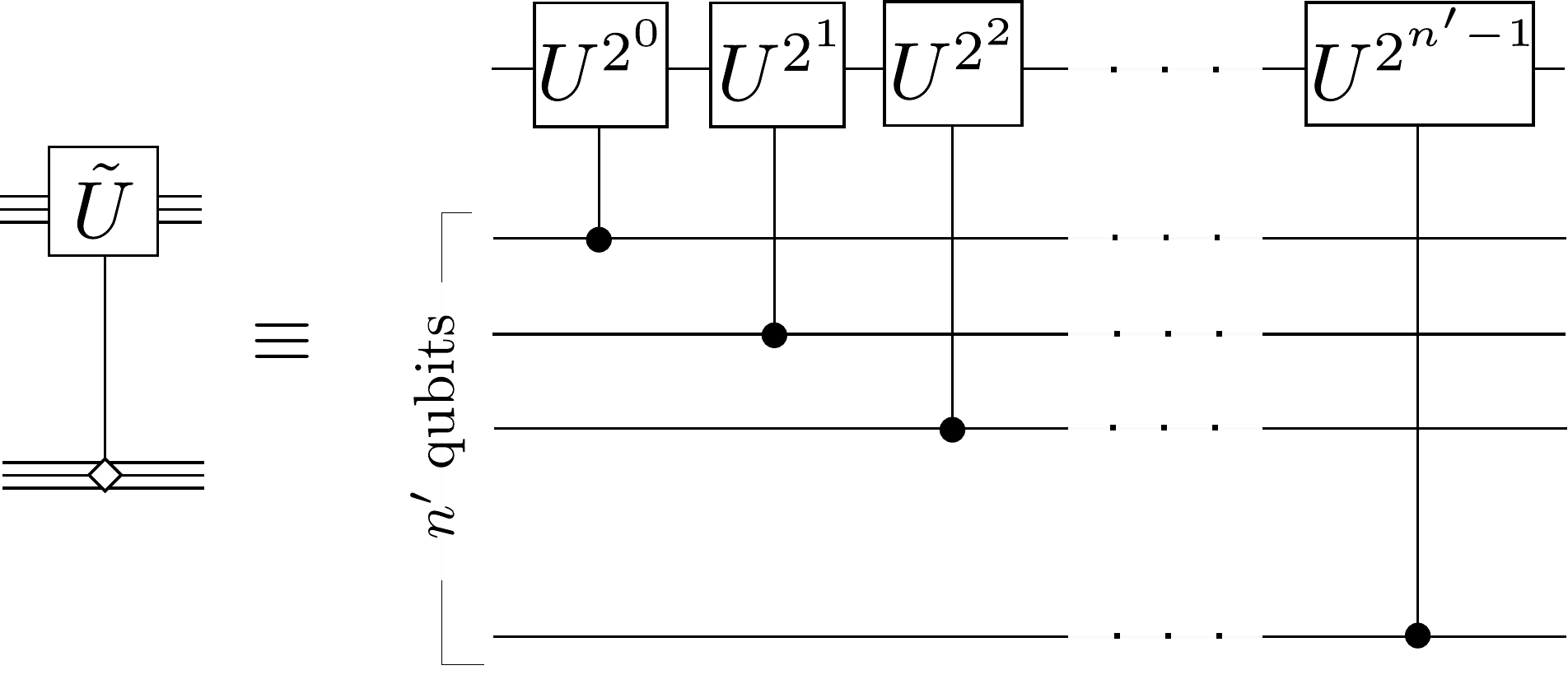}}
\end{minipage}
\caption{(a) Quantum algorithm to implement the approximate reflection $A$ based on 
repeated uses of the PEA. $n'=O(\log(1/\Delta))$. (b) The controlled $\tilde{U}$ operation expanded out.}
\label{fig:pea2}
\end{figure}

\begin{figure}
	\begin{center}
		\includegraphics[width=.22\textwidth]{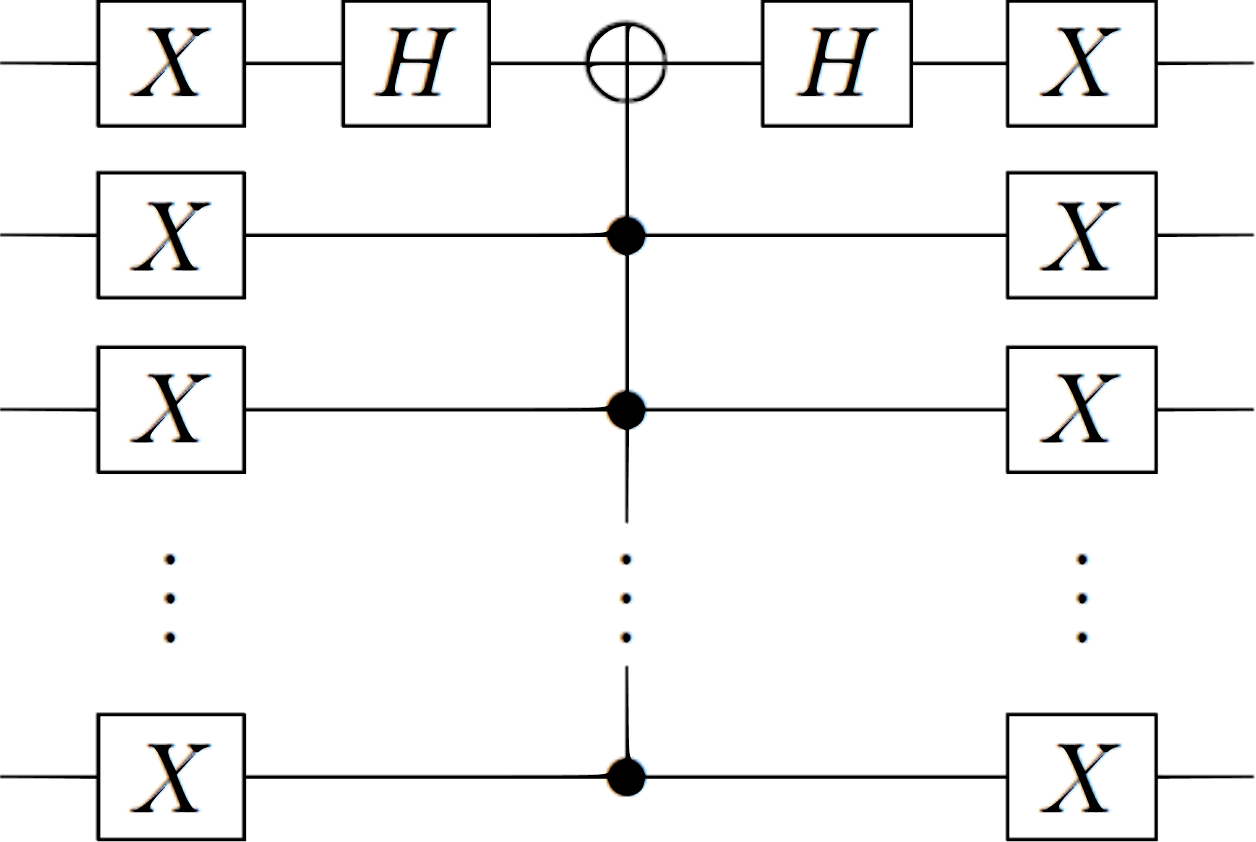}
	\end{center}
	\caption{Circuit to implement the reflection $R$ used in the PEA approach (up to an irrelevant phase of -1). 
	The multiply-controlled NOT gate can be decomposed into a number of two-qubit gates that scales linearly with the number of qubits using
	one additional qubit~\cite{HLZ17}. $X$ and $H$ are the Pauli and Hadamard single-qubit gates, respectively.}
	\label{fig:refancilla}
\end{figure}

The gate complexity of this approach is of order $q$ times
the gate complexity of the quantum Fourier transform, which is
$O(\log^2(1/\Delta))$. Nevertheless, it suffices to implement
each quantum Fourier transform with constant precision
by leaving out some number of controlled rotations with small angles. 
This is because, under the approximation, we still satisfy $A \ket 0 \ket{\psi_0}=\ket 0 \ket{\psi_0}$,
which is a property that has not been exploited in previous works.
Following Ref.~\cite{Cop94}, this reduces the gate complexity of each quantum Fourier transform
to $O(\log(1/\Delta) \log \log(1/\Delta))$. The operation $R$ of Fig.~\ref{fig:pea2} can be implemented by the circuit shown in Fig.~\ref{fig:refancilla} , which requires using two-qubit gates that scale linearly with the size of the ancilla. The overall gate complexity is then
\begin{align}
\label{eq:gatecostPEA}
C_B =   O \left(\log \left(\frac1\epsilon \right )\log \left( \frac 1 \Delta \right) \log\log \left( \frac 1 \Delta \right) \right) \;.
\end{align}
The query complexity is $C_U=O(\log(1/\epsilon)/\Delta)$.

\section{The LCU approach}
\label{sec:LCU}
We will first approximate a reflection operator
by a polynomial in $U$ and $U^\dagger$. We start with the Poisson summation formula, which
states
\begin{align}
\label{eq:Poisson1}
\sum_{k=-\infty}^\infty f(k)= \sqrt{2 \pi}\sum_{l=-\infty}^\infty \hat f(2 \pi l) \;.
\end{align}
Here $f$ is a Schwartz function and $\hat f$ is the (unitary) Fourier transform:
\begin{align}
\hat f(y) = \frac 1 {\sqrt{2\pi}} \int_{-\infty}^\infty dx \;f(x) e^{-i y x} \;,
\end{align}
$y,x \in \mathbb R$. In our case, we choose $f(x)=e^{-((\lambda + 2 \pi x)/\delta z)^2/2}$, for some $\delta z>0$, 
and $\hat f(y)=\delta z \; e^{-(y \delta z)^2/(8\pi^2)} e^{i y \lambda/(2\pi)} /(  {2\pi})$\;.
Then,
\begin{align}
\label{eq:Poisson2}
\sum_{k=-\infty}^\infty e^{-((\lambda + 2 \pi k)/\delta z)^2/2 }= \frac {\delta z} {\sqrt{2 \pi}} \sum_{l=-\infty}^\infty e^{-(l \delta z)^2/2} e^{i l \lambda} \;.
\end{align}
We will relate $\lambda$ to the eigenphase of the unitary $U$.
As expected, the summation formula is invariant under the transformation $\lambda \rightarrow \lambda \pm 2 \pi$.
For the following corollaries, we define
\begin{align}
\alpha_l := \frac {\delta z} {\sqrt{2 \pi}} e^{-(l \delta z)^2/2} \;.
\end{align}

Lemma~\ref{lem:approxerror1} in Appendix~\ref{app:refresults}
implies:
\begin{corollary}
\label{cor:ref1}
Let $1/5 \ge \epsilon >0$ and $\Delta>0$. Then, there exist $\delta z =O(\Delta/\sqrt{\log(1/\epsilon)})$ and $L=O(\log(1/\epsilon)/ \Delta)$
such that
\begin{align}
\left \|\left (  \sum_{l=-L}^{L-1} \alpha_l U^l -\one \right)\ket {\psi_0} \right \| =O( \epsilon) \; 
\end{align}
and, for $0< \Delta \le \lambda_j \le 2 \pi -\Delta$ (i.e., $j>0$), 
\begin{align}
\left \| \left (  \sum_{l=-L}^{L-1} \alpha_l U^l \right )\sket {\psi_j} \right \| =O( \epsilon) \;.
\end{align}
\end{corollary}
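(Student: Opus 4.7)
Since both operators in the corollary are polynomials in $U$ and $U^\dagger$, and since $U\sket{\psi_j} = e^{i\lambda_j}\sket{\psi_j}$, I would first observe that
\begin{align}
\left(\sum_{l=-L}^{L-1}\alpha_l U^l\right)\sket{\psi_j} = S_L(\lambda_j)\sket{\psi_j},\quad S_L(\lambda):=\sum_{l=-L}^{L-1}\alpha_l e^{il\lambda}.
\end{align}
The two norm bounds to establish therefore reduce to the scalar inequalities $|S_L(0)-1|=O(\epsilon)$ and $|S_L(\lambda_j)|=O(\epsilon)$ for $\lambda_j\in[\Delta,2\pi-\Delta]$. This strips away the Hilbert-space structure and turns the claim into a purely analytic statement about a truncated trigonometric sum.

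Next, I would control the truncation error $|S_L(\lambda)-S_\infty(\lambda)|$, where $S_\infty(\lambda):=\sum_{l\in\mathbb Z}\alpha_l e^{il\lambda}$. Using $|e^{il\lambda}|=1$ together with a standard Gaussian-tail estimate, the error is bounded by a constant times $\int_{L\delta z}^\infty e^{-t^2/2}\,dt = O(e^{-(L\delta z)^2/2}/(L\delta z))$. Forcing this to be $O(\epsilon)$ requires $L\delta z\gtrsim\sqrt{2\log(1/\epsilon)}$, which is compatible with the advertised choices $\delta z=O(\Delta/\sqrt{\log(1/\epsilon)})$ and $L=O(\log(1/\epsilon)/\Delta)$.

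The key step is then to evaluate $S_\infty(\lambda)$ via the Poisson identity~\eqref{eq:Poisson2}, which rewrites it as $\sum_{k\in\mathbb Z}e^{-((\lambda+2\pi k)/\delta z)^2/2}$. For $\lambda=0$ the $k=0$ term is exactly $1$ and the remaining terms form a rapidly decaying geometric-like series dominated by $2e^{-(2\pi/\delta z)^2/2}$. For $\lambda_j\in[\Delta,2\pi-\Delta]$, the point of minimum $|\lambda_j+2\pi k|$ occurs at $k=0$ or $k=-1$ and this minimum is at least $\Delta$; hence the whole series is dominated by $2e^{-(\Delta/\delta z)^2/2}$ plus a tail controlled by a Gaussian sum. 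With $\delta z$ chosen so that $(\Delta/\delta z)^2/2\gtrsim\log(1/\epsilon)$, both estimates give $O(\epsilon)$. A triangle inequality combining this with the truncation bound finishes both inequalities.

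The main obstacle I anticipate is purely bookkeeping: verifying that a single choice $\delta z=c_1\Delta/\sqrt{\log(1/\epsilon)}$ and $L=c_2\log(1/\epsilon)/\Delta$, with explicit constants $c_1,c_2$, simultaneously forces (i) the truncation tail $\sum_{|l|\ge L}\alpha_l$, (ii) the aliasing sum $\sum_{k\ne 0}e^{-((\lambda_j+2\pi k)/\delta z)^2/2}$ for $\lambda_j=0$, and (iii) the Poisson-evaluated value of $S_\infty(\lambda_j)$ for $\lambda_j\in[\Delta,2\pi-\Delta]$, all to be $O(\epsilon)$. The hypothesis $\epsilon\le 1/5$ is convenient here because it keeps $\delta z$ bounded above (so that $\alpha_l$ really is normalized as a Gaussian weight) and ensures the geometric tails in the Poisson series can be summed with harmless multiplicative constants.
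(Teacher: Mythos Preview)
Your proposal is correct and follows essentially the same route as the paper: the paper packages the scalar analysis (your Poisson evaluation of $S_\infty$ together with the Gaussian-tail truncation bound) into Lemma~\ref{lem:approxerror1} in Appendix~\ref{app:refresults}, and then derives the corollary in one line by the eigenvector substitution $U\to e^{i\lambda_j}$ that you begin with. The only cosmetic difference is ordering---the paper handles the $L=\infty$ Poisson analysis first and the truncation to finite $L$ last, whereas you do the reverse---but the ingredients and the resulting constraints on $\delta z$ and $L$ are identical.
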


The proof follows simply by replacing $U \rightarrow {e^{i\lambda_j}}$
in Lemma~\ref{lem:approxerror1} of Appendix~\ref{app:refresults} and by noticing
that $\delta z e^{-(L \delta z)^2/2}=O(\epsilon)$ if we choose the right constants hidden by the order notation of $L$ and $\delta z$.
Additionally, we obtain:
\begin{corollary}
\label{cor:ref2}
Let $1/5 \ge \epsilon >0$, $\Delta>0$, and $\beta_{-L},\ldots,\beta_{L-1}$ be 
complex numbers such that, for any state $\ket \xi$,
\begin{align}
\label{eq:cor2}
\left \|\left( \sum_{l=-L}^{L-1}(\alpha_l - |\beta_l|/2) U_l \right) \ket \xi \right\| =O(\epsilon) \;.
\end{align}
Then, there exist $\delta z =O(\Delta/\sqrt{\log(1/\epsilon)})$ and $L=O(\log(1/\epsilon)/ \Delta)$
such that
\begin{align}
\label{eq:cor2b}
\left \|\left (\sum_{l=-L}^{L-1} \frac{|\beta_l| } 2 U^{l} -\one \right)\ket {\psi_0} \right \| =O( \epsilon) \; 
\end{align}
%
and, for $0< \Delta \le \lambda_j \le 2 \pi -\Delta$ (i.e., $j>0$), 
\begin{align}
\left \| \left (\sum_{l=-L}^{L-1} \frac{|\beta_l |} 2 U^{l}  \right )\sket {\psi_j} \right \| =O( \epsilon) \;.
\end{align}
\end{corollary}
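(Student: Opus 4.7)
The plan is to deduce Corollary~\ref{cor:ref2} from Corollary~\ref{cor:ref1} by a straightforward triangle inequality, treating the hypothesis in~\eqref{eq:cor2} as a perturbation bound on the coefficients. The key observation is that Corollary~\ref{cor:ref1} already controls the action of the ``ideal'' operator $\sum_{l} \alpha_l U^l$ on the relevant eigenstates, so it is enough to show that replacing $\alpha_l$ by $|\beta_l|/2$ changes the output by at most $O(\epsilon)$.

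Concretely, I would first write the algebraic identity
\begin{align}
\sum_{l=-L}^{L-1} \frac{|\beta_l|}{2} U^l \;=\; \sum_{l=-L}^{L-1} \alpha_l U^l \;-\; \sum_{l=-L}^{L-1} \bigl(\alpha_l - |\beta_l|/2\bigr) U^l \;,
\end{align}
which holds as an operator equation. Subtracting $\one$ from both sides, applying to $\ket{\psi_0}$ and using the triangle inequality yields
\begin{align}
\left\| \left( \sum_{l=-L}^{L-1} \tfrac{|\beta_l|}{2} U^l - \one \right) \ket{\psi_0} \right\|
\le \left\| \left( \sum_{l=-L}^{L-1} \alpha_l U^l - \one \right) \ket{\psi_0} \right\|
+ \left\| \left( \sum_{l=-L}^{L-1} (\alpha_l - |\beta_l|/2) U^l \right) \ket{\psi_0} \right\| .
\end{align}
The first term on the right is $O(\epsilon)$ by Corollary~\ref{cor:ref1} (for the specified choices of $\delta z$ and $L$), and the second term is $O(\epsilon)$ by hypothesis~\eqref{eq:cor2} applied with $\ket\xi = \ket{\psi_0}$. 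This gives~\eqref{eq:cor2b}.

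For the orthogonal eigenstates $\sket{\psi_j}$ with $\Delta \le \lambda_j \le 2\pi-\Delta$, the argument is essentially identical: apply the same operator identity to $\sket{\psi_j}$ (no $\one$ to subtract this time), use the triangle inequality, bound the first piece by Corollary~\ref{cor:ref1} and the second by~\eqref{eq:cor2} with $\ket\xi = \sket{\psi_j}$. There is no real obstacle here; the only thing to be careful about is that the hypothesis~\eqref{eq:cor2} is stated for \emph{any} state $\ket\xi$, so we are entitled to specialize it to the eigenstates of $U$ that arise. The choices of $\delta z$ and $L$ are inherited directly from Corollary~\ref{cor:ref1} without modification, since the extra error introduced by passing from $\alpha_l$ to $|\beta_l|/2$ is controlled separately by the hypothesis rather than by the parameters of the Poisson summation approximation.
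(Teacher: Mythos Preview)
Your proposal is correct and matches the paper's own argument: the paper simply states that the corollary follows from Corollary~\ref{cor:ref1} and the triangle inequality, which is precisely the decomposition and bound you carry out.
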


The proof follows simply from Cor.~\ref{cor:ref1} and the triangle inequality.

Without loss of generality we can choose $L=2^{m-1}$, $m \ge 1$, to be a power of 2.
This will simplify the implementation of certain gates in our quantum algorithm.

\begin{definition}
\label{def:refLCU}
The approximate reflection operator for the LCU approach is
\begin{align}
\tilde R_{\psi_0} =   \sum_{l=-L}^{L-1}| \beta_l| U^{l} -  \one \;.
\end{align}
\end{definition}


That is, $\tilde R_{\psi_0}$ can be written as a polynomial in $U$ and $U^\dagger$,
and approximates $R_{\psi_0}$ since  Corollary~\ref{cor:ref2} and the definition of $\beta_l$ imply
\begin{align}
\label{eq:reflectionerror}
\|R_{\psi_0} - \tilde R_{\psi_0} \| =O( \epsilon) \;.
\end{align}

The important property is that $\tilde R_{\psi_0}$ is a LCU and also approximates a unitary transformation. 
We can then use the results of Ref.~\cite{BCC+15} to build a quantum
algorithm that implements $ R_{\psi_0}$, in the sense of Eq.~\eqref{eq:statement}, as follows.
Without loss of generality, we rewrite
\begin{align}
\label{eq:LCUref}
\tilde R_{\psi_0} = \sum_{l=-L}^{L+2} |\beta_l| \bar U_l \;.
\end{align}
For $-L \le l \le L-1$, the coefficients $\beta_l$ are as in Eq.~\eqref{eq:cor2} and the unitaries are $\bar U_l := U^l$.
For $l=L$ we define $\beta_L:=1$ and $\bar U_L:=-\one$. For $L+1 \le l \le L+2$ the coefficients are
$\beta_l:=(1/\sin \alpha -3)/2$
and the unitaries are $\bar U_{L+1}:=\one$ and $\bar U_{L+2}:=-\one$.
The angle is $\alpha = \pi/10$ and the last two terms in the LCU add up to zero. These terms
are needed to fit the framework of {\em oblivious amplitude amplification} (OAA) introduced in 
Refs.~\cite{BCC+14,BCC+15} as we will explain below.

We assume that there is a mechanism select($\bar U$) to implement controlled-$\bar U$ operations as follows:
\begin{align}
\label{eq:selectU}
{\rm select}(\bar U) \ket l \ket \xi := \ket l \bar U_l \ket \xi \;.
\end{align}
The details of this mechanism are explained in Sec.~\ref{sec:selectU}.
We also assume the existence of a unitary $B$ that acts as
\begin{align}
\label{eq:operationB}
B \ket 0 =\frac 1 {\sqrt s} \sum_{l=-L}^{L+2} \sqrt{\beta_l } \ket l \;.
\end{align}
Here, $s=\sum_{l=-L}^{L+2} |\beta_l|$ and 
$\ket 0$ is the $n$-qubit state $\ket0^{\otimes n}$.
The states $\ket l$ do not necessarily refer to a binary representation
of the integer $l+L$; it  suffices to satisfy $\bra l l' \rangle = \delta_{l,l'}$. The details of $B$ are explained in Sec.~\ref{sec:WB}.
From Eq.~\eqref{eq:cor2b} and~\eqref{eq:LCUref} we obtain
\begin{align}
\label{eq:s1}
|s-1/\sin \alpha| =O (\epsilon )\;.
\end{align}

If we define
\begin{align}
W:= (B^\dagger \otimes \one) ({\rm select}(\bar U))(B \otimes \one) \;,
\end{align} 
then
\begin{align}
\label{eq:Waction2}
W \ket{0} \ket \xi= \frac 1 s \ket 0 \tilde R_{\psi_0} \ket \xi + \sqrt{1 -\frac 1 {s^2}} \ket{\Phi}
\end{align}
for some
$\ket{\Phi}$
whose ancillary state is supported in the
subspace orthogonal to
$\ket 0$. Our goal is to prepare the first term on the right hand side of Eq.~\eqref{eq:Waction2}.

Last, and as in Sec.~\ref{sec:PEA}, we define the $n$-qubit ancilla
reflection operator $R:= 2 P-\one$, where $P :=\ketbra 0 \otimes \one$
is a projector (i.e., $P^2=P$).

Following Refs.~\cite{BCC+14,BCC+15}, if we were to assume that $s$
is exactly $1/\sin (\pi/10)$
and $\tilde R_{\psi_0}$ is an exact unitary operation, we would obtain
\begin{align}
\label{eq:notdefA}
A \ket 0 \ket \xi = \ket 0 \tilde R_{\psi_0} \ket \xi \;,
\end{align}
with
\begin{align}
\label{eq:defA}
A:=W R W^\dagger R W R W^\dagger R W \; ,
\end{align}
also being a unitary operation.
The quantum state on the right hand side of Eq.~\eqref{eq:notdefA} 
is the desired state.
This corresponds to two rounds of OAA rather than one as 
in Refs.~\cite{BCC+14,BCC+15}, reason why 
we chose $\alpha=\pi/10$.

Since neither $s=1/\sin(\pi/10)$ nor  $\tilde R_{\psi_0}$ is a unitary,
our previous assumptions and Eq.~\eqref{eq:notdefA} are invalid. However, due to our error bounds,
it  follows that (Sec.~\ref{sec:LCUcorrect})
\begin{align}
\label{eq:Aapprox}
\|A \ket 0 \ket \xi - \ket 0 \tilde R_{\psi_0} \ket \xi \| =O(\epsilon) \;,
\end{align}
for any $\ket \xi \in \cH$, which is our desired goal.
Then, our quantum algorithm to implement the approximate
reflection is simply the operation $A$.

\subsection{Correctness}
\label{sec:LCUcorrect}
To show that the quantum algorithm for the LCU approach works, we need
to show that Eq.~\eqref{eq:statement} is valid. 
To this end, we note that $PAP$  can be written as,
\begin{align}
PA P=& 5PWP-20PWPW^{\dagger}PWP + \nonumber \\  &+16PWPW^{\dagger}PWPW^{\dagger}PW P \label{eq:Aexp}.
\end{align}
We then use $PWP=(1/s)P\otimes\tilde{R}_{\psi_0}$ and obtain
\begin{align}\label{eq:PAexp}
PA P=P \otimes \Bigg(&\frac{5}{s}\tilde{R}_{\psi_0}-\frac{20}{s^3}\tilde{R}_{\psi_0}\tilde{R}_{\psi_0}^{\dagger}\tilde{R}_{\psi_0}\nonumber \\ &+ \frac{16}{s^5}\tilde{R}_{\psi_0}\tilde{R}_{\psi_0}^{\dagger}\tilde{R}_{\psi_0}\tilde{R}_{\psi_0}^{\dagger}\tilde{R}_{\psi_0} \Bigg) \;.
\end{align}
A simple calculation implies
\begin{align}
|1- 5/s +20/s^3-16/s^5|& =O(\epsilon) \;.
\end{align}
Also, our construction implies
\begin{align}
\| \one - \tilde{R}_{\psi_0}^{\dagger}\tilde{R}_{\psi_0}\|= O(\epsilon)
\end{align}
so that using the triangle inequality
\begin{align}
\| PAP - P \otimes \tilde R_{\psi_0} \| &= O(\epsilon) \;, \\
\| PAP - P \otimes R_{\psi_0} \| &= O(\epsilon) \;.
\end{align}
Since both $A$ and $R_{\psi_0}$ are unitaries, we obtain $\| PAP-AP\|=O(\epsilon)$.
Then Eq.~\eqref{eq:Aapprox} follows from the triangle inequality.
Finally, using the right constants hidden by the order notation 
in the approximation errors (see Lemma~\ref{lem:approxerror1}), Eq.~\eqref{eq:reflectionerror} implies Eq.~\eqref{eq:statement}.

\subsection{The operation $B$}
\label{sec:WB}
Our quantum algorithm uses the operation $B$
defined in Eq.~\eqref{eq:operationB}. 
In Appendix~\ref{app:gaussianprep} we prove the existence of a quantum algorithm $\hat B$
that prepares the quantum state
\begin{align}
\frac 1 {\sqrt 2} \sum_{l=-L}^{L-1} \sqrt{\beta_l} \ket l \;,
\end{align}
where $\ket{-L} =\ket {0 \ldots 00}, \ket{-L+1}=\ket{0 \ldots 01}, \ket{L-1}=\ket{1\ldots11}$ and the number of qubits
is $m=\log_2(2L)$. The parameter $L$ is as in Lemma~\ref{lem:approxerror1} and Corollary~\ref{cor:ref1}.
The complex numbers $\beta_l$ satisfy Eq.~\eqref{eq:cor2}.

The quantum state of Eq.~\eqref{eq:operationB} has $2L+3$ amplitudes. To build $B$ we start
with a quantum algorithm that prepares the two-qubit ancillary state proportional to
\begin{align}
\label{eq:manyqubitstate}
\sqrt 2 \ket {00} +\sqrt{\beta_{L}} \ket{01} + \sqrt{\beta_{L+1}} \ket{10} + \sqrt{\beta_{L+2} } \ket {11} \;.
\end{align}
This can be done with constant gate complexity.
We then add a system of $m$ qubits initialized in $\ket 0^{\otimes m}$. Last, we apply $\hat B$ conditional
on the first two qubits being in $\ket{00}$. The prepared state is
\begin{align}
\label{eq:manyqubitstate2}
\ket{00} \sum_{l=-L}^{L-1} \sqrt{\beta_l} \ket l + \sqrt{\beta_{L}} \ket{010 \ldots 0} + \\
\nonumber
+ \sqrt{\beta_{L+1}} \ket{10 0 \ldots 0} + \sqrt{\beta_{L+2} } \ket {110 \ldots 0} \;,
\end{align}
where $l$ denotes the integer $l+L$ in binary using $m$ bits.

\subsection{The operation ${\rm select}(\bar U)$}
\label{sec:selectU}
The operation ${\rm select}(\bar U)$ acts as in Eq.~\eqref{eq:selectU}.
 We label the $n$ qubits of the ancillary state
as $1,2,\ldots,n=2+m$. The first two qubits are the ancillary qubits used to prepare
the state of Eq.~\eqref{eq:manyqubitstate}. Conditional on the state of
these two qubits being $\ket{01}$, $\ket{10}$, and $\ket{11}$, ${\rm select}(\bar U)$
applies the unitary $-\one$, $\one$, and $-\one$, respectively. This operation can be summarized
with the diagonal Pauli operator $Z$ acting on the second qubit. Next, conditional 
on the state of these first two qubits being in $\ket{00}$,
${\rm select}(\bar U)$ applies the operation $U^{-L}$. Last,
conditional on the state of the $n$ qubits being in $\ket{00b_{3} \ldots b_n}$ ($b_i \in \{0,1\}$),
${\rm select}(\bar U)$ applies the operation
\begin{align}
 U^{\sum_{j=3}^n 2^{n-j} b_j} \;.
\end{align}
An example of a ${\rm select}(\bar U)$ operation is shown in Fig.~\ref{fig:selectU}. Its action 
for different basis states of the ancillary system is 
\begin{figure}
\begin{center}
\includegraphics[width=.35\textwidth]{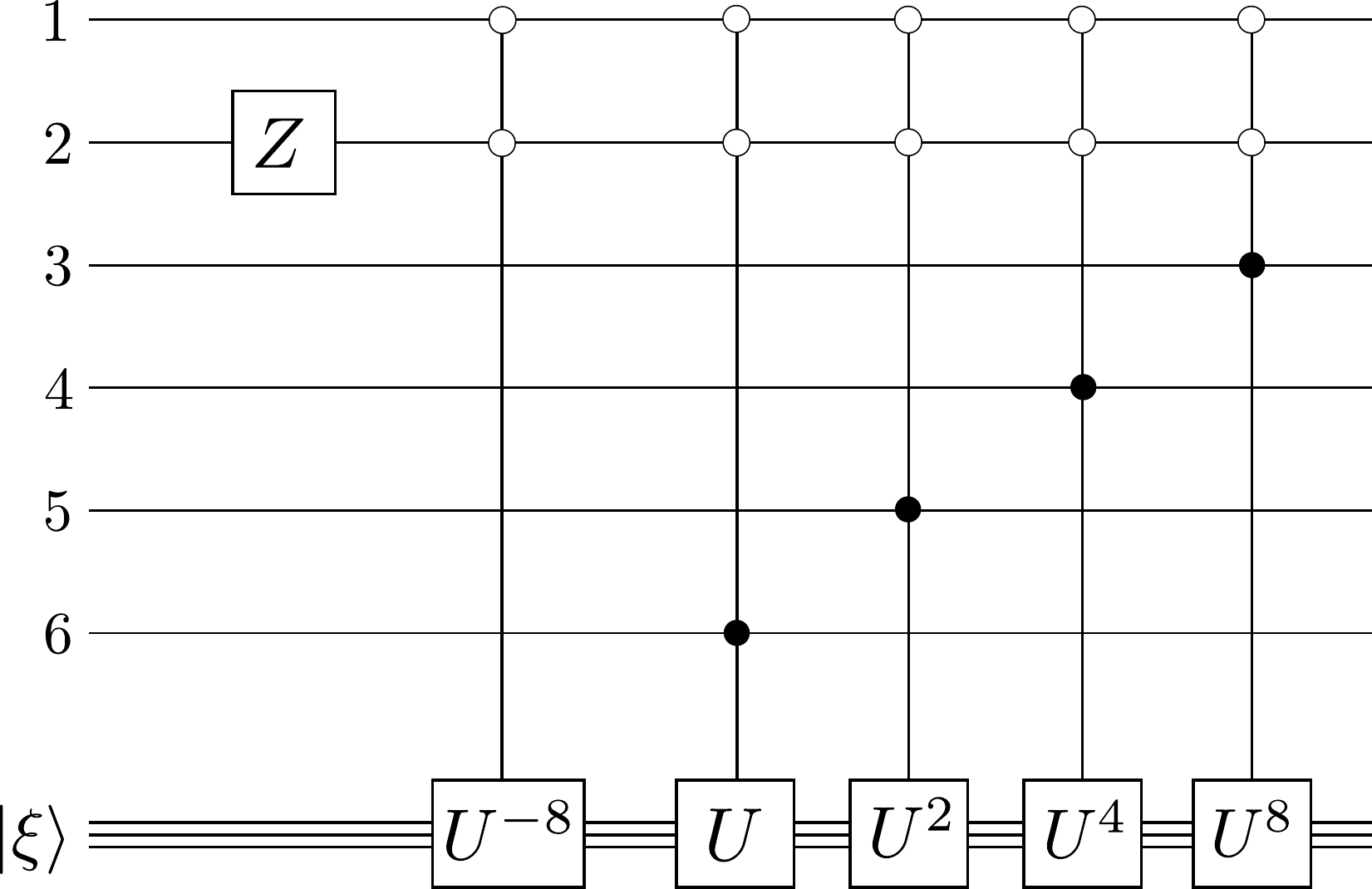}
\end{center}
\caption{The ${\rm select}(\bar U)$ operation for $n=6$ and $L=8$.}
\label{fig:selectU}
\end{figure}

\begin{align}
{\rm select}(\bar U)
\left \{ 
\begin{matrix*}[l]
\ket{00} \ket l \ket \xi & \rightarrow &\ket{00} \ket l U^{l-8} \ket \xi \;, & 0 \le l \le 15 \;, \cr
\ket{010000}  \ket \xi & \rightarrow &-\ket{010000} \ket \xi \;, \cr
\ket{100000}  \ket \xi & \rightarrow &\ket{100000} \ket \xi \;, \cr
\ket{110000}  \ket \xi & \rightarrow &-\ket{110000} \ket \xi \;.
\end{matrix*}
\right .
\label{eq:selectUb}
\end{align}

\subsection{Ancillary qubits and complexity}
\label{sec:LCUcomplex}
The operation $A$ uses the operation  $B$ of Sec.~\ref{sec:WB}.
Then, the number
of ancillary qubits  is $n=2+m$.
Using the results of Lemma~\ref{lem:approxerror1}, we obtain
\begin{align}
\label{eq:nLCU}
 n=O\left(  \log  \log(1/\epsilon) +  \log(1/\Delta) \right) \;.
 \end{align}
 The number of ancillary qubits    for the LCU approach
is then a significant improvement with respect to the   number of ancillary qubits of the PEA approach --
see Eq.~\eqref{eq:nPEA}   for a comparison.

The operation $A$ uses $W$ and $W^\dagger$
five times. It follows that $ B$ ($\hat B$) and $ B^\dagger$ [$(\hat B)^\dagger$]
are also used a constant number of times. 
Following Sec.~\ref{sec:WB} and Appendix~\ref{app:gaussianprep},
\begin{align}
\hat B \ket 0 = \tilde F^{\rm d}_{\rm c} \ket \phi \;,
\end{align}
where  $\tilde F^{\rm d}_{\rm c}$ is an $O(\epsilon)$ approximation of the centered Fourier transform $  F^{\rm d}_{\rm c}$.
According to Eq.~\eqref{eq:fcddecomp}, the centered Fourier transform
uses the QFT three times. Thus, its gate complexity is of the same order as
that of the QFT. The approximate centered Fourier transform $\tilde F^{\rm d}_{\rm c}$
uses an approximate unitary QFT that is obtained by avoiding those phase gates
where the phases are sufficiently small. Using the results of Ref.~\cite{Cop94}, the gate complexity of $\tilde F^{\rm d}_{\rm c}$
is
\begin{align}
O \left( m \log(m/\epsilon)\right).
\end{align}

Lemma~\ref{lem:stateprep1} in Appendix~\ref{app:gaussianprep} 
implies that the state $\ket \phi$ is a superposition of $2L^* = O(\log(1/\epsilon))$
basis states.
Then, Ref.~\cite{SBM06} provides a method
to prepare such a state over $\log_2(2 L^*)$ qubits with gate complexity that is $O(L^*)$.
It is 
important to remark that no other ancillary qubits are needed
to prepare $\ket{\phi}$. The method in Ref.~\cite{SBM06}
requires precomputing $O(L^*)$ rotation angles classically
with sufficiently high precision.
This results in an additional classical complexity that we do 
not consider here since this step has to be done only one time
and does not change the quantum gate complexity.

The state thus prepared is only on a register of $\log_2 (2L^*)$ qubits however, whereas the $\tilde F^{\rm d}_{\rm c}$ operation acts on a Hilbert space of $m=\log_2 (2L)$ qubits. Specifically, $\tilde F^{\rm d}_{\rm c}$ needs to act on a state of the form $\sum_{l=-L}^{L-1}\gamma_l\ket{l}$ where $\gamma_l$'s correspond to the amplitudes of $\ket{\phi}$ for $-L^* \leq l \leq L^*-1$ and are 0 otherwise. In other words, the state $\ket{\phi}$ needs to be centered on a register of  $m$ qubits that encodes $2L$ basis states. This can be done using $ \log_2(L-L^*) = O\left(\log(\log(1/\epsilon)/\Delta)\right)$ two qubit gates, as shown in Appendix~\ref{app:centering}.

In the limit where $\Delta = O\left(\frac{1}{\log(1/\epsilon)}\right)$, which includes cases where $\Delta \ll 1$, $\epsilon \ll 1$, we obtain $m=O(\log(1/\Delta))$.
The overall gate complexity of $B$ in this limit is then
\begin{align}
C_B=O \left( \log \left(\frac 1 \Delta \right) \log \left(  \frac{\log \left(\frac 1 \Delta \right)}{\epsilon}  \right)\right) \;.
\end{align}
This is comparable to the gate complexity obtained in the PEA approach -- see Eq.~\eqref{eq:gatecostPEA}.

The query complexity to implement ${\rm select}(\bar U)$ is, at most, $L$.
So the total query complexity of the LCU approach is
$C_U = O(\log(1/\epsilon)/\Delta)$.
This is similar to the query complexity of the PEA approach.


\section{Reflections and Hamiltonians}
\label{sec:Hamilt}
In this section we discuss the case of
 reflections over eigenstates of Hamiltonians. 
 This case is relevant for, e.g., Ref.~\cite{BKS10}.
 We let $H$ be a Hamiltonian acting on states in $\cH$ such that
\begin{align}
H \ket {\psi_j} = \lambda_j \ket {\psi_j} \; ,
\end{align}
$j=0,1,\ldots,D-1$ and $\ket {\psi_0}$ is the target state.
The eigenvalues satisfy
\begin{align}
\label{eq:eigenvprop}
|\lambda_0 - \lambda_j| \ge \Delta \ , j > 0 \;.
\end{align}
That is, $\Delta$ is a lower bound on the spectral gap. Since we work with finite dimensional 
Hilbert spaces, we can assume $\|H \| \le 1$.

We seek an approximation of the operator that makes reflections
over $\ket {\psi_0}$. Then, under the assumptions, we can readily use 
the results for the unitary case if we take
\begin{align}
U := e^{i(H-\lambda_0)} \;.
\end{align}
This is the evolution operator 
induced by $H$ for a unit of time. 

We consider the scenario described
in Sec.~\ref{sec:problem} where the matrix elements of $H$
can be queried. References~\cite{BCC+14,BCC+15,LC17}
provide then a way to construct an approximation of $U$
using the queries ${\cal Q}_H$.

\begin{definition}
The query complexity of implementing an approximate reflection
over $\ket {\psi_0}$ in the Hamiltonian case, $C_H$, is the number
of times the procedure $\cQ_H$ is invoked.
\end{definition}

Clearly, $C_H$ will depend on the approximation error
and the Hamiltonian simulation method used to implement an approximation of $U$.
For example, using the method in Ref.~\cite{LC17}, we can construct a unitary operator $U'$  
using the procedure $\cQ_H$
\begin{align}
 O \left(d + \frac {\log(1/\varepsilon)}{\log\log(1/\varepsilon)}\right) \;
\end{align}
times, and 
\begin{align}
\|U' \ket 0^{\otimes n'_H} \ket \xi- \ket 0^{\otimes n'_H} U \ket \xi \| \le \varepsilon \;. 
\end{align}
$n'_H$ is the number of ancillary qubits required to implement $U'$.
We note that in the approximation of $R_{\psi_0}$ of Def.~\eqref{def:refLCU}, the degree of the polynomial
in $U$ and $U^\dagger$ is $L=O(\log(1/\epsilon)/\Delta)$. We will then choose
$\varepsilon = O(\epsilon/L)$ and define:
\begin{definition}
\label{def:refHamil}
The approximate reflection operator in the Hamiltonian case is
\begin{align}
\tilde R_{\psi_0} =   \sum_{l=-L}^{L-1}| \beta_l| U'^{l} -  \one  \;.
\end{align}
\end{definition}

This definition implies
\begin{align}
\|\tilde R_{\psi_0}\ket 0^{\otimes n'_H} \ket \xi- \ket 0^{\otimes n'_H} R_{\psi_0} \ket \xi \| =O( \epsilon) \;.
\end{align}

Following Sec.~\ref{sec:LCU}, the quantum algorithm to implement the approximate reflection is then
\begin{align}
A:= WRW^\dagger RWRW^\dagger RW \;, 
\end{align}
and
\begin{align}
W:= (B^\dagger \otimes \one) ({\rm select} (\bar U)) (B \otimes \one) \;.
\end{align}
The operation $B$ is the one described in Sec.~\ref{sec:WB}. The operation ${\rm select} (\bar U)$
is similar to the one described in Sec.~\ref{sec:selectU} with the only difference being that the unitary $U$ is replaced by the unitary $U'$. 
$R$ is a reflection operator acting on $n$ qubits as in Eq.~\eqref{eq:nLCU}.
Using the right constants in the order notation, this definition of $A$ implies Eq.~\eqref{eq:statement}.

 It follows from Ref.~\cite{LC17} that the query complexity of implementing $A$ is
 \begin{align}
 C_H = O \left(  L \left(d + \frac {\log(L/\epsilon)}{\log\log(L/\epsilon)} \right) \right) \;,
 \end{align}
 where $L=O(\log(1/\epsilon)/\Delta)$ has been determined in  Lemma~\ref{lem:approxerror1}.
 The number of additional two-qubit gates
 also depends on the Hamiltonian simulation method
 that is used to implement $A$.  For example, following Ref.~\cite{LC17},
 the gate complexity is dominated by that of the Hamiltonian simulation method and is
 \begin{align}
C_B= O\left( (\log D + h \; {\rm polylog} (h)  )C_H\right) \;,
 \end{align}
 where $h$ is the number of bits of precision of the matrix elements of $H$.
 Last, the total number of ancillary qubits resulting from Ref.~\cite{LC17} for the Hamiltonian case is also
 \begin{align}
 n_H = O \left(  \log \log(1/\epsilon)+ \log(1/\Delta)  \right) \;,
 \end{align}
i.e., $n'_H=O(1)$.

 As in the LCU approach, the total number of qubits is an improvement
 with respect to those needed if we followed the PEA approach
 for the current case.
 

%

\section{Lower bound on query complexity}
\label{sec:lowerbound}
In this section we obtain a lower bound on the query complexity of performing a reflection over an eigenvector of the unitary operator
$U$. The proof is based on the optimality of Grover's search algorithm. 
We consider the unstructured search problem with a unique marked element $\ket{t}$ in a search space of size $D$
and write $\ket s$ for the equal superposition state. In Ref.~\cite{BCWZ99} it was shown that the number of 
queries to the black box needed to solve this problem with a quantum computer
and probability greater or equal than $1-\nu$ is $\Theta (\sqrt{D \log(1/\nu)})$, with $\nu \ge 2^{-D}$.

We define
\begin{align}
	\sket{\tilde \psi_0} &:= \frac{\ket{s} + \ket{t}}{\sqrt{2(1+1/\sqrt{D})}}\, .
\end{align}
and note that
\begin{align}
	R_{\tilde \psi_0} \ket{s} = \ket{t} \;,
\end{align}
with $R_{\tilde \psi_0} := 2 \ketbra{\tilde \psi_0} - \one$ being also a reflection operator.
It follows that the search problem can be solved exactly
with a single application of $R_{\tilde \psi_0}$

Motivated by the action of $R_{\tilde \psi_0}$, we let $R_t:=2 \ketbra t - \one$ and $R_s:=2 \ketbra s - \one$ be reflection operators over $\ket{t}$ and $\ket{s}$, respectively. 
In Grover's search algorithm, $R_t$ is implemented with a single query to the black box.
We further define the following unitary operators 
\begin{align}
	V &:= e^{i \frac{\pi}{2} \ket{s}\bra{s}} = \one + (i-1) \ket{s}\bra{s} \, , \\
	U &:= -e^{-i \cos^{-1}\left( 1-\frac{2}{D} \right)} V^\dagger R_s R_t V
\end{align}
$U$ has a unique eigenvector of eigenvalue 1, which we denote $\sket{ \psi_0}$,
and approximates $\sket{\tilde \psi_0}$ in the limit of large $D$.
The other eigenvalues are such that $\Delta = O(1/\sqrt D)$.

The reflection operator over $\ket{\psi_0}$ can be shown to satisfy
\begin{align}
\label{RSzero}
	\bra{s} R_{\psi_0}\ket{s} = 0 \, .	
\end{align}
Additionally, we let $\tilde R_{\psi_0}$ be the approximate reflection
that satisfies
\begin{align}
\label{RRepsilon}
	\| \tilde R_{\psi_0} - R_{\psi_0}\| \le \epsilon \;.
\end{align}
The approximate reflection operator can be used to solve 
the unstructured search problem by acting on $\ket s$ with failure probability $\nu = 1- | \bra{t}\tilde{R}_{\psi_0}\ket{s} |^2$. 
Using Eqs.~\eqref{RSzero} and~\eqref{RRepsilon}, and the fact that $\braket{t}{s}=1/\sqrt{D}$, 
this failure probability can be upper bounded by
\begin{align}
	\nu = O\left(\left(\frac{1}{\sqrt{D}}+\epsilon\right)^2\right) \, .
\end{align}
Moreover, we can always choose $\epsilon=O(1/\sqrt{D})$ such that $\nu \ge 2^{-D}$. 
Then the results of Ref.~\cite{BCWZ99} can be applied to obtain a lower bound on the query complexity. 
Since $U$ makes a single query to the black box, it follows
that the query complexity of $\tilde R_{\psi_0}$ is
\begin{align}
\Omega(\sqrt{D \log(1/(1/\sqrt{D}+\epsilon))}) \;.
\end{align}
In terms of the eigenphase gap $\Delta$, this is
\begin{align}
\Omega((1/\Delta)\sqrt{\log(1/(\Delta+\epsilon))}) \; ,
\end{align}
which is valid for $\epsilon = O(\Delta)$.

\section{Acknowledgements}
ANC thanks G.\ Muraleedharan and N.\ Wiebe for helpful discussions, and G.H. Low for pointing out
Ref.~\cite{Tul16}. ANC was supported by a Google Research Award through a part of the duration of this project.
RS and YS were supported by the LDRD program at Los Alamos National Laboratory.

\appendix
\section{Approximate reflections}
\label{app:refresults}
Our definition for the approximate reflection 
operator in the LCU approach follows the results
of corollaries~\ref{cor:ref1} and~\ref{cor:ref2}. These corollaries are
 a consequence of the following lemma:
\begin{lemma}
\label{lem:approxerror1}
Let $1/5 \ge \epsilon>0$ and $\Delta>0$. Then, there exist $\delta z =O(\Delta/\sqrt{\log(1/\epsilon)})$ 
and $L=O(\log(1/\epsilon)/ \Delta)$ such that
\begin{align}
\label{eq:PoissonIneq1}
\left |\frac {\delta z} {\sqrt{2 \pi}} \sum_{l=-L}^{L-1} e^{-(l \delta z)^2/2} e^{i l \lambda} -1 \right | =O( \epsilon)
\end{align}
if $\lambda =0$ and 
\begin{align}
\label{eq:PoissonIneq2}
\left |\frac {\delta z} {\sqrt{2 \pi}} \sum_{l=-L}^{L-1} e^{-(l \delta z)^2/2} e^{i l \lambda} \right | =O( \epsilon)
\end{align}
if $\Delta\le \lambda \le 2 \pi-\Delta$.
\end{lemma}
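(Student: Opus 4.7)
The plan is to start from the Poisson summation identity in Eq.~\eqref{eq:Poisson2}, which already equates the infinite sum $\frac{\delta z}{\sqrt{2\pi}}\sum_{l=-\infty}^{\infty} e^{-(l\delta z)^2/2} e^{il\lambda}$ with the ``spatial-side'' sum of Gaussians centered at $\lambda = -2\pi k$. The proof then reduces to two tasks: (i) control the aliasing error, i.e., bound the spatial-side sum in both regimes of $\lambda$, and (ii) control the truncation error from restricting $l$ to $\{-L,\ldots,L-1\}$. The triangle inequality combines the two.

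For the aliasing step, I would split the spatial sum into the $k=0$ term (which equals $1$ for $\lambda=0$ and $e^{-\lambda^2/(2\delta z^2)}$ in general) and the remaining terms. When $\lambda=0$, the tail is $\sum_{k\ne 0} e^{-(2\pi k/\delta z)^2/2}$, which is geometric-like and dominated by $2e^{-2\pi^2/\delta z^2}$. When $\Delta\le\lambda\le 2\pi-\Delta$, every center satisfies $|\lambda+2\pi k|\ge \Delta$, and the whole sum can be bounded by $2\sum_{k\ge 0} e^{-(\Delta+2\pi k)^2/(2\delta z^2)}$, which again is dominated by its leading term $2e^{-\Delta^2/(2\delta z^2)}$. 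Choosing $\delta z = c\,\Delta/\sqrt{\log(1/\epsilon)}$ with a sufficiently small constant $c$ (e.g.\ $c\le 1/\sqrt{2}$) makes the leading factor $e^{-\Delta^2/(2\delta z^2)} \le \epsilon^{1/(2c^2)}=O(\epsilon)$, and the same choice renders the $\lambda=0$ tail negligible (since $\Delta\le 2\pi$).

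For the truncation step, I would bound $\frac{\delta z}{\sqrt{2\pi}}\sum_{|l|\ge L} e^{-(l\delta z)^2/2}$ by a standard Gaussian tail estimate: comparing the sum with an integral gives an upper bound of order $e^{-(L\delta z)^2/2}/(L\delta z)$. Demanding this to be $O(\epsilon)$ requires $L\delta z = \Omega(\sqrt{\log(1/\epsilon)})$. Plugging in $\delta z = c\Delta/\sqrt{\log(1/\epsilon)}$ this is satisfied by $L=C\log(1/\epsilon)/\Delta$ with $C$ chosen large enough (any $cC\ge \sqrt{2}$ suffices, up to the logarithmic slack from the $1/(L\delta z)$ factor). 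The slight asymmetry of the range $\{-L,\ldots,L-1\}$ versus $\{-L+1,\ldots,L\}$ is harmless because a single endpoint term $e^{-(L\delta z)^2/2}$ is already within the required tolerance.

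The main obstacle is simply coordinating the constants: $\delta z$ must be small enough that the aliasing residual is $O(\epsilon)$, while $L\delta z$ must be large enough that the Gaussian tail is $O(\epsilon)$. Since the former forces $\delta z \lesssim \Delta/\sqrt{\log(1/\epsilon)}$ and the latter forces $L\gtrsim \sqrt{\log(1/\epsilon)}/\delta z \sim \log(1/\epsilon)/\Delta$, both scalings stated in the lemma fall out naturally; no delicate cancellation is needed, only a careful bookkeeping of constants hidden in the $O(\cdot)$ notation. Assembling the two bounds via the triangle inequality yields \eqref{eq:PoissonIneq1} and \eqref{eq:PoissonIneq2}.
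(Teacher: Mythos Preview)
Your proposal is correct and mirrors the paper's own proof almost exactly: both start from the Poisson identity~\eqref{eq:Poisson2}, bound the spatial-side aliasing terms by choosing $\delta z=O(\Delta/\sqrt{\log(1/\epsilon)})$, bound the truncation tail via a Gaussian-integral/Chernoff estimate forcing $L\delta z=\Omega(\sqrt{\log(1/\epsilon)})$, and combine by the triangle inequality. The only cosmetic differences are that the paper shifts $\lambda$ into $[-\pi,\pi)$ before bounding the $k\ne 0$ terms (whereas you bound $|\lambda+2\pi k|\ge\Delta$ directly on $[\Delta,2\pi-\Delta]$) and invokes the Chernoff bound by name for the tail, but the logic and the resulting constants are the same.
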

\begin{proof}
To prove Eq.~\eqref{eq:PoissonIneq1}, we will show first that the terms with $k \ne 0$ are $O(\epsilon)$
in the left hand side of Eq.~\eqref{eq:Poisson2} with the proper choice of $\delta z$ and for $\lambda=0$. 
First, we assume that $L = \infty$ so we can use the Poisson formula.
If $\delta z \le \pi /\sqrt{\log(c/\epsilon)}$, for some constant $c>1$, we obtain
\begin{align}
\nonumber
\sum_{k \ne 0} e^{-( 2 \pi k /\delta z)^2/2 }& \le \sum_{k \ne 0} (\epsilon/c)^{2 k^2}\\
\nonumber
& \le   2 {\epsilon^2}/({c^2-\epsilon^2})\\
& \le \epsilon/2c \;,
\end{align}
where we used $\epsilon \le 1/5$.
To prove the case of $\lambda \ne 0$, we need to show that all terms in the sum of the left hand side of Eq.~\eqref{eq:Poisson2} are small.
We note that this sum is invariant under the transformation $\lambda \rightarrow \lambda+ 2 \pi$
so we can assume that $\lambda \in [-\pi,-\Delta] \cup [\Delta,\pi)$.
We assume first that $\pi \ge \lambda \ge \Delta$ and the other case can be analyzed similarly.
The term with $k=0$ is $e^{-(\lambda/\delta z)^2/2}$.  This is smaller than $\epsilon/(4c)$ if we choose
$\delta z \le  \Delta/\sqrt{2 \log(4c/\epsilon)}$. Additionally,
\begin{align}
\sum_{k \ne 0} e^{-( (\lambda +2 \pi k) /\delta z)^2/2 } \le  \sum_{k \ne 0} e^{-(  \pi k /\delta z)^2/2 } \;.
\end{align}
As in the previous case, we can make this term smaller than $\epsilon/(4c)$ by choosing
$\delta z \le \pi/\sqrt{\log(2c/\epsilon)}$. Therefore, there is a $\delta z=O(\Delta/\sqrt{\log(1/\epsilon)})$ such that the
right hand sides of Eqs.~\eqref{eq:PoissonIneq1}
and~\eqref{eq:PoissonIneq2}
are bounded by $\epsilon/(2c)$, in the limit $L=\infty$.

To conclude the proof, we analyze the terms
in the Poisson summation formula with $|l|\ge L$, for some $L<\infty$ determined below.
We note
\begin{align}
\nonumber
\left |\frac { \delta z} {\sqrt{2 \pi}} \sum_{\substack{l\ge L \\ l<-L}} e^{-(l \delta z)^2/2} e^{-i l \lambda} \right |& \le \frac {\delta z} {\sqrt{2 \pi}} \sum_{|l|\ge L} e^{-(l \delta z)^2/2} \\
\nonumber
& \le \frac 2 {\sqrt{2 \pi}} \int_{x=(L-1) \delta z}^\infty dx \; e^{-x^2/2} \\
\label{eq:ChernoffIneq}
& \le 2 e^{-((L-1) \delta z)^2/2} \; ,
\end{align}
where the last inequality follows from the Chernoff bound~\cite{Pro95}. Then, this term can be made at most $\epsilon/(2c)$
if $(L-1) \delta z \ge \sqrt{2 \log(4c/\epsilon)}$.
It follows that there exists $L =O(\log{(1/\epsilon)}/\Delta)$
such that Eq.~\eqref{eq:ChernoffIneq} is upper bounded by $\epsilon/(2c)$.
Using the triangle inequality we conclude the proofs of Eqs.~\eqref{eq:PoissonIneq1} and ~\eqref{eq:PoissonIneq2}.
We can choose the constant $c$   to obtain exact bounds hidden by the order notation.
\end{proof}

\section{Preparation of states with Gaussian like amplitudes}
\label{app:gaussianprep}
We seek a quantum algorithm $\hat B$ that prepares an approximation of the (unnormalized) state
\begin{align}
\ket \psi = \left ( \frac {\delta z}{\sqrt{2\pi}}\right)^{1/2} \sum_{l=-L}^{L-1} e^{-(l \delta z)^2/4} \ket l \;,
\end{align}
where the states are ordered in the computational basis such that $\ket {-L} = \ket{0 \ldots 00}, \ket{-L+1}=\ket{0 \ldots 01}, \ldots \ket{L-1}=\ket{1\ldots 11}$.
$L$ is as in Lemma~\ref{lem:approxerror1} and, without loss of generality, $L$ is a power of 2: $L=2^{m-1}$. 

It will be useful to introduce the ``centered '' Fourier transform $F^{\rm d}_{\rm c}$:
\begin{align}
F^{\rm d}_{\rm c} = X^{L} . F^{\rm d}. X^{L} \;,
\end{align}
where $F^{\rm d}$ is the standard quantum Fourier transform of dimension $2L$ (i.e., acting on $m$ qubits)
and $X$ is the cyclic permutation
\begin{align}
X = \begin{pmatrix}  0 & 1 & 0 & \cdots & 0 \cr 0 & 0 & 1 & \cdots & 0 \cr \vdots & \vdots & \vdots & \cdots & \vdots \cr 1 & 0 & 0 & \cdots & 0 \end{pmatrix} \;.
\end{align}
We note that $X=F^{\rm d} .Z . (F^{\rm d})^{-1}$, where $Z$
is the diagonal operation that has the roots of unity as diagonal entries.
In particular,
\begin{align}
Z^{L} = \begin{pmatrix} 1 & 0 & 0 & \cdots & 0 \cr 0 & -1 & 0 & \cdots & 0 \cr 0 & 0 & 1 & \cdots & 0 \cr \vdots & \vdots & \vdots & \cdots & \vdots \cr
0 & 0 & 0 & \cdots & -1\end{pmatrix} \;.
\end{align}
If we label the qubits as $0,1,\ldots,m-1$, $Z^L$ is equivalent to the action of the diagonal Pauli operator $\sigma_z^0$.
Then,
\begin{align}
\label{eq:fcddecomp}
F^{\rm d}_{\rm c} =F^{\rm d}. \sigma_z^0 . F^{\rm d}. \sigma_z^0 . (F^{\rm d})^{-1}  \;.
\end{align}

\begin{lemma}
\label{lem:stateprep1}
Let $L$, $\delta z$, $\epsilon$, and $\Delta$ be as in Lemma~\ref{lem:approxerror1}. Then, there exists $L^* =O(\log(1/\epsilon))$  such that
\begin{align}
\label{eq:stateprep1}
\| \ket \psi - F^{\rm d}_{\rm c} \ket{\phi}  \| 
 = O(\epsilon) \;,
\end{align}
where 
\begin{align}
\ket{\phi}=\frac 1 {\sqrt{\cal N}} \sum_{l=-L^*}^{L^*-1} e^{-(l \pi/(L \delta z))^2} \ket l 
\end{align}
and ${\cal N} = \sum_{l=-L^*}^{L^*-1} e^{-2(l \pi/(L \delta z))^2}$.
\end{lemma}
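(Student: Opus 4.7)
The plan is to apply $F^{\rm d}_{\rm c}$ explicitly to $\ket{\phi}$, obtain an amplitude that is a truncated discrete Gaussian sum, and then invoke the Poisson summation formula together with tail bounds to match it to the target Gaussian profile in $\ket{\psi}$ up to $O(\epsilon)$.

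Concretely, I would first write
\begin{align}
F^{\rm d}_{\rm c}\ket{\phi} = \frac{1}{\sqrt{2L\,\mathcal N}}\sum_{l=-L}^{L-1}\!\left(\sum_{k=-L^*}^{L^*-1}\! e^{-(k\pi/(L\delta z))^2}\,e^{i\pi l k/L}\right)\!\ket{l}\;,
\end{align}
using the fact that $F^{\rm d}_{\rm c}$ is the $2L$-dimensional Fourier transform with both input and output indices relabeled to $\{-L,\ldots,L-1\}$ (which is the effect of conjugating by $X^L$, equivalently the two $\sigma_z^0$ flips in the decomposition \eqref{eq:fcddecomp}). The inner sum is the object we need to understand.

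Next, I would extend the inner sum from $k\in\{-L^*,\ldots,L^*-1\}$ to $k\in\mathbb Z$. The tail $\sum_{|k|\ge L^*}e^{-(k\pi/(L\delta z))^2}$ is bounded by a Chernoff-style integral exactly as in Eq.~\eqref{eq:ChernoffIneq} of Lemma~\ref{lem:approxerror1}, which shows that choosing $L^*\pi/(L\delta z)=\Omega(\sqrt{\log(1/\epsilon)})$ gives tail mass $O(\epsilon)$. Since $L\delta z=\Theta(\sqrt{\log(1/\epsilon)})$ by the choices in Lemma~\ref{lem:approxerror1}, this forces $L^*=O(\log(1/\epsilon))$, as claimed. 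To the infinite sum I would apply the Poisson summation formula with $f(k)=e^{-(k\pi/(L\delta z))^2}e^{i\pi l k/L}$, whose Fourier transform is again Gaussian; the identity gives
\begin{align}
\sum_{k\in\mathbb Z}e^{-(k\pi/(L\delta z))^2}e^{i\pi l k/L}=\frac{L\delta z}{\sqrt\pi}\sum_{n\in\mathbb Z}e^{-((l+2Ln)\delta z)^2/4}\;.
\end{align}

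The $n=0$ term on the right is precisely the desired amplitude $e^{-(l\delta z)^2/4}$ (up to the prefactor). For $l\in\{-L,\ldots,L-1\}$ and $n\neq 0$, one has $|l+2Ln|\ge L$, so every aliasing term is bounded by $e^{-(L\delta z)^2/4}$ times a geometric tail; with $L\delta z=\Theta(\sqrt{\log(1/\epsilon)})$ the total aliasing contribution is also $O(\epsilon)$. Combining this with the truncation bound and tracking the overall prefactor shows that $F^{\rm d}_{\rm c}\ket{\phi}$ equals $(L\delta z/\sqrt{\pi})(2L\,\mathcal N)^{-1/2}\sum_l e^{-(l\delta z)^2/4}\ket l$ up to $\ell^2$-error $O(\epsilon)$.

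The last and most delicate step — which I expect to be the main obstacle — is checking that the normalization constants line up so that the prefactor above matches $(\delta z/\sqrt{2\pi})^{1/2}$ from the definition of $\ket\psi$ to within $O(\epsilon)$. This amounts to showing $\mathcal N\approx L\delta z/(\pi\sqrt 2)$ and $\|\ket\psi\|\approx 1$, both of which are established by applying Lemma~\ref{lem:approxerror1} (with $\lambda=0$ and width parameters $\delta z$ and $2\pi/(L\delta z)$, respectively) to the two discrete Gaussian sums involved. Once the prefactors cancel to leading order, a final application of the triangle inequality combining the truncation, aliasing, and normalization errors yields \eqref{eq:stateprep1}.
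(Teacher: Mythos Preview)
Your approach is correct and essentially the same as the paper's: both use Poisson summation for Gaussians to relate $\ket\phi$ and $\ket\psi$ under $F^{\rm d}_{\rm c}$ and control the same three errors (truncation to $L^*$, aliasing from the $n\neq 0$ Poisson terms, and normalization), with the only difference being organizational---the paper routes the argument through a chain of intermediate states $\ket{\phi'''}\to\ket{\phi''}\to\ket{\phi'}\xrightarrow{F^{\rm d}_{\rm c}}\ket{\psi'}\to\ket{\psi}$ and invokes the exact DFT identity between periodized Gaussians from Ref.~\cite{Som15}, whereas you apply the DFT and Poisson formula directly. One minor arithmetic slip: the leading value of $\mathcal N$ is $L\delta z/\sqrt{2\pi}$, not $L\delta z/(\pi\sqrt 2)$, which is exactly what is needed for your prefactor $(L\delta z/\sqrt\pi)(2L\,\mathcal N)^{-1/2}$ to match $(\delta z/\sqrt{2\pi})^{1/2}$.
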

\begin{proof}
We let $T=2 \sqrt{\pi L}$ be a variable that refers to a period and $\gamma = \sqrt{\pi/L}$ be a variable
that refers to a size of a discretization.
We  define the following (unnormalized) states:
\begin{align}
\ket{\phi'}& :=\left( \frac{\sqrt{2\pi}}{L \delta z}\right)^{1/2} \sum_{l=-L}^{L-1} c_l \ket l \;,\\
\ket{\psi'}& := \left( \frac{\delta z} {\sqrt{2 \pi}}\right)^{1/2}\sum_{l=-L}^{L-1} d_l \ket l \;,
\end{align}
where the amplitudes are
\begin{align}
c_l = \sum_{k=-\infty}^\infty e^{-(l \gamma + kT)^2 \pi/(L \delta z^2)} \;, \\
d_l = \sum_{k=-\infty}^\infty e^{-(l \gamma + kT)^2 L \delta z^2/(4 \pi)} \;.
\end{align}
Following Ref.~\cite{Som15}, it can be shown that
\begin{align}
\ket{\psi'} = F^{\rm d}_{\rm c} \ket{\phi'} \;.
\end{align}

We will use the triangle inequality to prove Eq.~\eqref{eq:stateprep1}.
We note that
\begin{align}
\label{eq:approx1}
\| \ket \psi - \ket {\psi'} \|^2 =  \frac{\delta z} {\sqrt{2 \pi}} \sum_{l=-L}^{L-1} \left | \sum_{k \ne 0} e^{-(l \gamma + kT)^2 L \delta z^2/(4 \pi)} \right |^2 \;.
\end{align}
Also, $|l \gamma + kT| \ge k T/2$ so the right hand side of Eq.~\eqref{eq:approx1}
can be bounded by
\begin{align}
 \frac{2\sqrt 2 L \delta z} {\sqrt{ \pi}} \sum_{k>1} e^{-k^2 L^2 \delta z^2/4}
& = O \left ( \sqrt{\log(1/\epsilon)} \epsilon^3 \right) \\
\nonumber
& = O(\epsilon^2)
\end{align}
if we choose $L$ and $\delta z$ such that $L \delta z \ge \sqrt{12 \log(1/\epsilon)}$.

We also define the state
\begin{align}
\ket{\phi''}:=\left( \frac{\sqrt{2\pi}}{L \delta z}\right)^{1/2} \sum_{l=-L}^{L-1} e^{-(l \gamma)^2 \pi/(L \delta z^2)} \ket l \;.
\end{align}
Then,
\begin{align}
\label{eq:approx2}
\| \ket{\phi'} -\ket{\phi''} \|^2 =  \frac{\sqrt{2\pi}}{L \delta z}\sum_{l=-L}^{L-1} \left| \sum_{k \ne 0} e^{-(l \gamma + kT)^2 \pi/(L \delta z^2)} \right|^2 \;.
\end{align}
The right hand side of Eq.~\eqref{eq:approx2}
can be bounded as
\begin{align}
 \frac{4 \sqrt{2\pi}}{ \delta z} \sum_{k>1} e^{-k^2 \pi^2/\delta z^2}
&= O\left(\frac{\sqrt{\log(1/\epsilon)}}{\Delta} \epsilon^{\pi^2/\Delta^2} \right) \\
\nonumber & = O \left( \epsilon^2 \right) \;
\end{align}
if $\delta z \le \Delta/ \sqrt{\log(1/\epsilon)}$. To obtain the correct order we used $\sqrt{\log(1/\epsilon)}/\Delta \le (1/\epsilon)^{2/\Delta^2}$
for $\epsilon \le 1/5$.

For some $L^*\ge 1$ that we choose below, we now let 
\begin{align}
\ket{\phi'''}:=\left( \frac{\sqrt{2\pi}}{L \delta z}\right)^{1/2} \sum_{l=-L^*}^{L^*-1} e^{-(l \gamma)^2 \pi/(L \delta z^2)} \ket l \;.
\end{align}
Then,
\begin{align}
\label{eq:approx3}
\| \ket{\phi''} - \ket{\phi'''}\|^2 =  \frac{\sqrt{2\pi}}{L \delta z} \sum_{\substack{l\ge L^* \\ l <-L^*}} e^{-2(l \gamma)^2 \pi/(L \delta z^2)} \;.
\end{align}
The right hand side of Eq.~\eqref{eq:approx3} is
\begin{align}
O \left(  e^{-2((L^*-1) \pi/(L \delta z))^2}\right) \;.
\end{align}
According to Lemma~\ref{lem:approxerror1}, the parameters $L$ and $\delta z$ satisfy $L^2 \delta z^2 = c \log(1/\epsilon)$, for some constant $c>0$.
We can then choose $L^* = O(\log(1/\epsilon))$ such that
the right hand side of Eq.~\eqref{eq:approx3} is
\begin{align}
O ( \epsilon^2) \;.
\end{align}

The states $\ket \phi$ and $\ket{\phi'''}$ are proportional to each other. Since $\| \ket \phi \|=1$, we obtain
\begin{align}
\| \ket \phi - \ket{\phi'''} \|^2 = \left| 1 - \| \ket{\phi'''}\| \right|^2 \;.
\end{align}
Also,
\begin{align}
\label{eq:approx4}
 \| \ket{\phi'''}\| ^2 = \frac{\sqrt{2\pi}}{L \delta z} \sum_{l=-\infty}^{\infty} e^{-2(l \gamma)^2 \pi/(L \delta z^2)} + O(\epsilon^2)\;.
\end{align}
Using the Poisson summation formula, the first term on the right hand side of Eq.~\eqref{eq:approx4} is
\begin{align}
\sum_{l=-\infty}^{\infty} e^{-(l L \delta z)^2/2} \;.
\end{align}
This is also
\begin{align}
1 + O(\epsilon^2)
\end{align}
if $L$ and $\delta z$ are chosen such that $L^2 \delta z^2 \ge 4 \log(1/\epsilon)$. It follows that
\begin{align}
\| \ket \phi - \ket{\phi'''} \|^2 = O(\epsilon^2) \;.
\end{align}

Finally, using the triangle inequality, we obtain
\begin{align}
\nonumber
\| \ket \psi - F^{\rm d}_{\rm c} \ket{\phi}  \|  \le & \| \ket \psi - \ket{\psi'}\| + \| \ket{\phi'} -\ket{\phi''}\| + \\
\nonumber & \| \ket{\phi''} -\ket{\phi'''}\| + \| \ket{\phi'''} -\ket{\phi}\| \\
& = O(\epsilon)\;.
\end{align}
This proves the Lemma. 
\end{proof}

We are now ready to prove Eq.~\eqref{eq:cor2}. 
We let $\tilde F^{\rm d}_{\rm c}$ be a unitary operation
that approximates the centered Fourier transform and $\| \tilde F^{\rm d}_{\rm c} -  F^{\rm d}_{\rm c}\| = O(\epsilon)$.
We define the coefficients $\alpha_l$ and $\beta_l$ and
the operations $B_e$ and $\hat B$ such that
\begin{align}
\ket \psi &= B_e \ket 0 =\sum_{l=-L}^{L-1} \sqrt{\alpha_l} \ket l \; , \\
\label{eq:approxQFT}
\tilde F_{\rm c}^{\rm d} \ket \phi &= \hat B \ket 0   =\sum_{l=-L}^{L-1} \sqrt{\beta_l/2} \ket l \;.
\end{align}
The state $\ket 0$ is the initial state of $m=\log_2 (2L)$ qubits.
Then,
\begin{align}
&\bra 0 (B_e^\dagger \otimes \one) {\rm select}(\bar U) (B_e \otimes \one) \ket 0 =\sum_{l=-L}^{L-1} \alpha_l U_l \;, \\
&\bra 0 (\hat B^\dagger \otimes \one) {\rm select}(\bar U) (\hat B \otimes \one) \ket 0 =\sum_{l=-L}^{L-1} |\beta_l| U_l/2 \;.
\end{align}
The operation ${\rm select}(\bar U) $ is unitary and was defined in Eq.~\eqref{eq:selectU}.
Since $\| \ket \psi \|=O(1)$, $\|{\rm select}(\bar U) \|=1$, $\| (B_e-\hat B) \ket 0\|=O(\epsilon)$, and $\| \bra 0 (B_e-\hat B)^\dagger \|=O(\epsilon)$,
we obtain
\begin{align}
\| \bra 0 (B_e^\dagger \otimes \one) {\rm select}(\bar U) (B_e \otimes \one) \ket 0 - \\
\nonumber
- \bra 0 (\hat B^\dagger \otimes \one) {\rm select}(\bar U) (\hat B \otimes \one) \ket 0 \| = \\
\nonumber
=\| \sum_{l=-L}^{L-1} (\alpha_l -|\beta_l|/2) U_l \| = O(\epsilon)\;.
\end{align}
This proves Eq.~\eqref{eq:cor2}.

\section{Centering states}
\label{app:centering}

\begin{figure}
	\begin{center}
		\includegraphics[width=.25\textwidth]{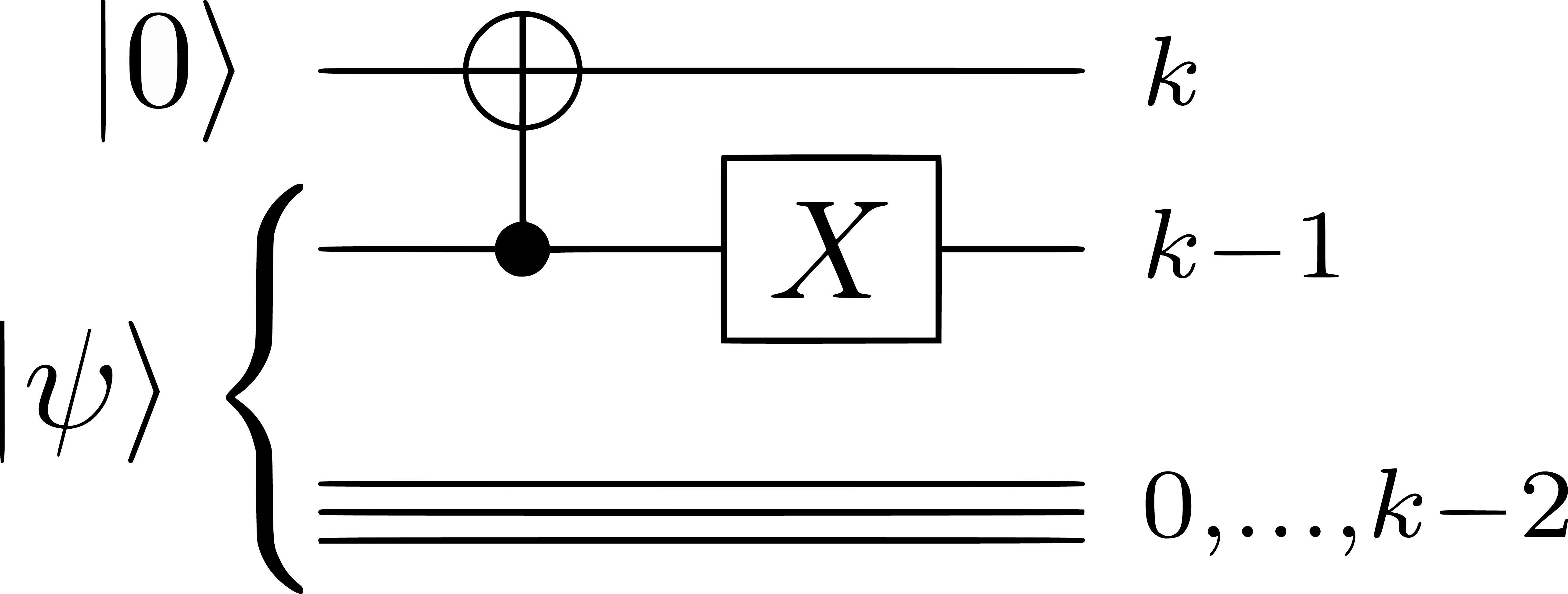}
	\end{center}
	\caption{Quantum circuit for centering a $ k $-qubit state on a register of $ k+1 $ qubits. The wires are arranged from top to bottom in the order of the most to least significant bit in the binary encoding. Only the top two wires are acted upon; the lower $ (k-1) $ wires, labelled 0 through $ k-2 $, are unaffected.}
	\label{fig:centeringcircuit}
\end{figure}

\begin{figure}
	\centering
	\subfloat[]{
		\begin{tikzpicture}
		\draw (0,0)--(0.6,0) node[pos=0.6, below]{$\scriptstyle |00\rangle$} ;\draw (0.8,0)--(1.4,0) node[pos=0.6, below]{$\scriptstyle |01\rangle$}; \draw (1.6,0)--(2.2,0) node[pos=0.6, below]{$\scriptstyle |10\rangle$}; \draw (2.4,0)--(3.0,0) node[pos=0.6, below]{$\scriptstyle |11\rangle$};
		\draw [line width=1pt] (0.3,0.1)--(0.3,0.5); \draw [line width=1pt] (1.1,0.1)--(1.1,0.9); \draw [line width=1pt] (1.9,0.1)--(1.9,0.7);\draw [line width=1pt] (2.7,0.1)--(2.7,0.8);
		\end{tikzpicture}
	}
	\par
	\subfloat[]{
		\begin{tikzpicture}
		\draw (0,0)--(0.6,0) node[pos=0.6, below]{$\scriptstyle |000\rangle$} ;\draw (0.8,0)--(1.4,0) node[pos=0.6, below]{$\scriptstyle |001\rangle$}; \draw (1.6,0)--(2.2,0) node[pos=0.6, below]{$\scriptstyle |010\rangle$}; \draw (2.4,0)--(3.0,0) node[pos=0.6, below]{$\scriptstyle |011\rangle$};
		\draw (3.2,0)--(3.8,0) node[pos=0.6, below]{$\scriptstyle |100\rangle$} ;\draw (4.0,0)--(4.6,0) node[pos=0.6, below]{$\scriptstyle |101\rangle$}; \draw (4.8,0)--(5.4,0) node[pos=0.6, below]{$\scriptstyle |110\rangle$}; \draw (5.6,0)--(6.2,0) node[pos=0.6, below]{$\scriptstyle |111\rangle$};
		\draw [line width=1pt] (0.3,0.1)--(0.3,0.5); \draw [line width=1pt] (1.1,0.1)--(1.1,0.9); \draw [line width=1pt] (1.9,0.1)--(1.9,0.7);\draw [line width=1pt] (2.7,0.1)--(2.7,0.8);
		\end{tikzpicture}
	}
	\par
	\subfloat[]{
		\begin{tikzpicture}
		\draw (0,0)--(0.6,0) node[pos=0.6, below]{$\scriptstyle |000\rangle$} ;\draw (0.8,0)--(1.4,0) node[pos=0.6, below]{$\scriptstyle |001\rangle$}; \draw (1.6,0)--(2.2,0) node[pos=0.6, below]{$\scriptstyle |010\rangle$}; \draw (2.4,0)--(3.0,0) node[pos=0.6, below]{$\scriptstyle |011\rangle$};
		\draw (3.2,0)--(3.8,0) node[pos=0.6, below]{$\scriptstyle |100\rangle$} ;\draw (4.0,0)--(4.6,0) node[pos=0.6, below]{$\scriptstyle |101\rangle$}; \draw (4.8,0)--(5.4,0) node[pos=0.6, below]{$\scriptstyle |110\rangle$}; \draw (5.6,0)--(6.2,0) node[pos=0.6, below]{$\scriptstyle |111\rangle$};
		\draw [line width=1pt] (1.9,0.1)--(1.9,0.5); \draw [line width=1pt] (2.7,0.1)--(2.7,0.9); \draw [line width=1pt] (3.5,0.1)--(3.5,0.7);\draw [line width=1pt] (4.3,0.1)--(4.3,0.8);
		\end{tikzpicture}
	}
	\caption{Example of centering a 2-qubit state on a register consisting of 3 qubits (the vertical bars represent probability amplitudes): (a) shows the initial quantum state on 2 qubits, (b) is the state when a new qubit initialized in $ |0\rangle $ is added to the register, and (c) is after the action of the circuit in Fig.\ \ref{fig:centeringcircuit}}
	\label{fig:centeringexample}
\end{figure}
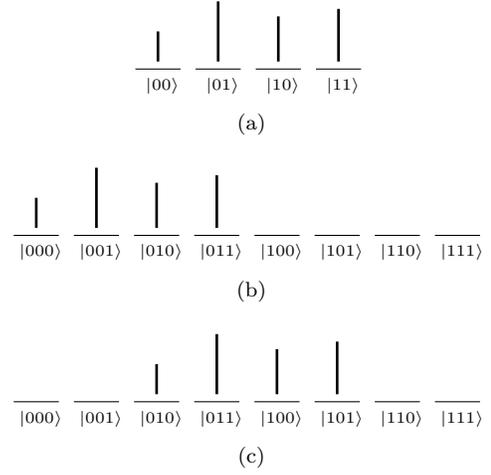

Given a quantum state $\ket{\phi_k}$ over $2L^*$ basis states on a register of $k=\log_2(2L^*)$ qubits, we need to center it over $2L>2L^*$ basis states encoded on a register of $ m=\log_2(2L) $ qubits. Centering is a transformation from
 the old to the new register that takes the basis states $\left\{\ket{j}| 0\leq j\leq 2L^*-1\right\}$ of the old register to the basis states $\left\{\ket{j}| (L-L^*)\leq j\leq (L+L^*)-1\right\}$ of the new register. To see how this can be done, we consider the case when we are given a state prepared on a register of $k$ qubits and wish to center it on one consisting of $(k+1)$ qubits. Suppose that the $ k $-qubit register is in a basis state $\ket{q_{k-1}q_{k-2}\dots q_0}$ with the qubits labelled from 0 to $(k-1)$, in the order of the most to least significant bits in the corresponding binary string. $q_j$ here denotes the value (0 or 1) of the state of qubit $j$. The decimal number represented by this binary bit-string is $q_{k-1}\cdot 2^{k-1}+q_{k-2}\cdot 2^{k-2}+\dots+q_0\cdot2^0$. We append an additional qubit, initialized in 0, to the left, i.e., $\ket{0_{k}q_{k-1}q_{k-2}\dots q_0}$. The centering is a permutation of the bases that corresponds to a cyclic shift by $2^{k-1}$. Figure \ref{fig:centeringcircuit} shows the quantum circuit that implements this permutation, and Fig.\ \ref{fig:centeringexample} demonstrates its action for $ k=2 $.  We first perform a CNOT gate where the target qubit is the appended qubit and the control qubit is the $k$th qubit. This copies $q_{k-1}$ to the $ k $-th position, the decimal number represented by the new bit-string being $q_{k-1}\cdot2^{k}+q_{k-1}\cdot2^{k-1}+\dots+q_0\cdot2^0$. Finally, we perform a NOT (i.e., Pauli $X$) gate on the $(k-1)$-th qubit, preparing a quantum state which now represents the decimal number $q_{k-1}\cdot2^{k}+(q_{k-1}\oplus 1)\cdot2^{k-1}+\dots+q_0\cdot2^0$. The difference between the initial and final decimal numbers associated
 with the states of $k+1$ and $k$ qubits is
\begin{align}
q_{k-1}\cdot2^{k}+(q_{k-1}\oplus 1)\cdot2^{k-1} - q_{k-1}\cdot 2^{k-1} \nonumber \\ = 2^{k-1}\;,
\end{align}
which is the desired shift.
Note that the state of the other $(k-1)$ qubits remains unaffected. To obtain a centered state in a register of $m$ qubits, it suffices to repeat the above procedure for each appended qubit, i.e., $(m-k)$ times, which requires $(m-k)$ two qubit gates overall.

\end{document}